\newcommand{\vertiii}[1]{{\left\vert\kern-0.25ex\left\vert\kern-0.25ex\left\vert #1 
    \right\vert\kern-0.25ex\right\vert\kern-0.25ex\right\vert}}
\newtheorem{thm}{Theorem}
\newtheorem*{thm*}{Theorem}
\newcommand{\setthmtag}[1]{
  \let\oldthethm\thethm
  \newcommand{\thethm}{#1}
  \g@addto@macro\endthm{
    \addtocounter{thm}{-1}
    \global\let\thethm\oldthethm}
  }
\newtheorem{prop}[thm]{Proposition}
\newtheorem*{prop*}{Proposition}
\newtheorem{lemma}[thm]{Lemma}
\newtheorem*{lemma*}{Lemma}
\newtheorem*{cor*}{Corollary}
\newtheorem*{cj*}{Conjecture}
\newtheorem*{Def*}{Definition}
\theoremstyle{definition}
\newtheorem*{rem*}{Remark}
\def\beq{\begin{equation}}
\def\eeq{\end{equation}}
\def\bq{\begin{quote}}
\def\eq{\end{quote}}
\def\ben{\begin{enumerate}}
\def\een{\end{enumerate}}
\def\bit{\begin{itemize}}
\def\eit{\end{itemize}}
\def\r|{\right|}
\newcommand{\diag}{{\operatorname{diag}}}
\newcommand{\tr}{{\operatorname{tr}}}
\newcommand{\sign}{{\operatorname{sign}}}
\newcommand{\ketbra}[1]{|#1\rangle\langle#1|}
\newcommand{\norm}[1]{\left\|#1\right\|}
\newcommand\be{\begin{equation}}
\newcommand\ee{\end{equation}}
\begin{document}
\title{Efficient simulation of parametrized quantum circuits under non-unital noise through Pauli backpropagation}

\author{Victor Martinez}
\email[Victor Martinez ]{victor.martinez@ibm.com}
\affiliation{IBM France, Avenue de l'Europe, 92275 Bois-Colombes, France}
\affiliation{Inria, ENS Lyon, UCBL, LIP, F-69342 Lyon Cedex 07, France}
\author{Armando Angrisani}
\email[Armando Angrisani ]{armando.angrisani@epfl.ch}
\affiliation{Institute of Physics, Ecole Polytechnique F\'{e}d\'{e}rale de Lausanne (EPFL), CH-1015 Lausanne, Switzerland}
\author{Ekaterina Pankovets}
\email[Ekaterina Pankovets ]{ekaterina.pankovets@epfl.ch}
\affiliation{Institute of Physics, Ecole Polytechnique F\'{e}d\'{e}rale de Lausanne (EPFL), CH-1015 Lausanne, Switzerland}
\author{Omar Fawzi}
\email[Omar Fawzi ]{omar.fawzi@ens-lyon.fr}
\affiliation{Inria, ENS Lyon, UCBL, LIP, F-69342 Lyon Cedex 07, France}
\author{\begingroup
\hypersetup{urlcolor=navyblue}
\href{https://orcid.org/0000-0001-9699-5994}{Daniel Stilck Fran\c{c}a}
\endgroup}
\email[Daniel Stilck Fran\c ca ]{dsfranca@math.ku.dk}
\affiliation{Inria, ENS Lyon, UCBL, LIP, F-69342 Lyon Cedex 07, France}
\affiliation{Department of Mathematical Sciences, University of Copenhagen, Universitetsparken 5, 2100 Denmark}

\begin{abstract}
As quantum devices continue to grow in size but remain affected by noise, it is crucial to determine when and how they can outperform classical computers on practical tasks. A central piece in this effort is to develop the most efficient classical simulation algorithms possible. Among the most promising approaches are Pauli backpropagation algorithms, which have already demonstrated their ability to efficiently simulate certain classes of parameterized quantum circuits—a leading contender for near-term quantum advantage—under random circuit assumptions and depolarizing noise. However, their efficiency was not previously established for more realistic non-unital noise models, such as amplitude damping, that better capture noise on existing hardware. Here, we close this gap by adapting Pauli backpropagation to non-unital noise, proving that it remains efficient even under these more challenging conditions. Our proof leverages a refined combinatorial analysis to handle the complexities introduced by non-unital channels, thus strengthening Pauli backpropagation as a powerful tool for simulating near-term quantum devices.
\end{abstract}

\maketitle

\section{Introduction}

With the recent substantial growth in both the size and quality of quantum hardware, the quantum community has searched for quantum algorithms that can be implemented on near-term hardware and provide an advantage over classical algorithms on practically relevant problems~\cite{Kim2023,Yu2023,CarreraVazquez2024,Glick2024}.

In this search for efficient and practical quantum computing, parameterized quantum circuits (PQC) have gained considerable attention \cite{Cerezo2021, Bharti_2022}. These circuits play a crucial role in variational quantum algorithms (VQA) and are frequently employed in areas such as optimization, with algorithms like QAOA \cite{farhi2014} and VQE \cite{Peruzzo_2014,Cerezo_2022}, as well as in machine learning \cite{bowles2024,Glick2024}. These algorithms work by optimising the parameters, often rotation angles, of the quantum circuit to minimise or maximise an objective function, made of expectation values of observables. Initially thought to be somewhat noise resistant, strong limitations on their performance in the presence of noise were later shown \cite{Wang2021}, essentially showing that the outputs of these algorithms offer no quantum advantage at constant~\cite{StilckFrana2021} depth or moderately above that~\cite{Takagi2022} and other works established that noisy quantum computers can be simulated in various regimes. However, until recently~\cite{schuster2024,mele2024}, the vast majority of such results was restricted to noise that is \emph{unital}, i.e. maps the maximally mixed state to itself. However, the noise in current devices often does not satisfy this assumption~\cite{Chirolli2008,BlumeKohout2017}. Thus, it is imperative to understand to what extent we can simulate noisy quantum computation under non-unital noise. It is important to note that extending classical algorithms 
to such noise is far from being straightforward. Indeed, the effect of non-unital noise is known to be much more subtle than unital noise and strong qualitative gaps exist between the two. For instance, as shown in~\cite{benor2013}, without access to fresh qubits we can only implement log depth circuits under depolarizing noise, whereas we can compute for exponential times with non-unital noise.

In this work, we will show how to efficiently simulate noisy PQCs under non-unital noise using Pauli backpropagation methods. Among the various simulation approaches for VQAs, Pauli backpropagation methods are showing significant promise in practice and offer strong theoretical guarantees~\cite{fontana2023,rudolph2023, begusic2024,angrisani2024,mele2024,Shao2024,schuster2024,lerch2024,bermejo2024,rall2019,aharonov2022,gonzalezgarcía2024paulipathsimulationsnoisy}, showcasing great accuracy on systems with up to 127 qubits, even in noiseless conditions \cite{rudolph2023,Shao2024,angrisani2024, tanggara2024classically}. This approach focuses on structured circuits composed of alternating layers of parameterized rotations and Clifford operations. These circuits, which form a universal family, closely resemble the ones commonly used in PQCs. Using the circuit structure and the effect of unital noise, it has been shown that these simulation methods can recover the energy landscape of the PQC on average over the parameters in polynomial time~\cite{fontana2023,Shao2024}.

Among the Pauli backpropagation methods, LOWESA~\cite{fontana2023,rudolph2023,lerch2024,bermejo2024} has demonstrated several notable advantages. The algorithm is both memory-efficient and parallelizable. Additionally, the most resource-intensive part of the algorithm, Pauli backpropagation, only depends on the structure of the circuit and not on the values of the angles. Thus, the algorithm can be seen as a method to efficiently estimate the energy landscape as a function of the angles. Finally, the runtime of the algorithm is independent of both the quantum chip geometry and the depth of the quantum circuit.

From a technical viewpoint, the proof techniques we use require dealing with combinatorial arguments in a more subtle way than in \cite{fontana2023}, along with new Monte-Carlo techniques to handle the non-unital component of the noise. The main reason for that is that depolarizing noise uniformly dampens all possible Pauli strings, whereas non-unital noise does not. Nevertheless, our final result has a similar scaling as the ones of~\cite{fontana2023}. 

Another extension of Pauli backpropagation to random circuits with arbitrary uniform noise, possibly non-unital or dephasing, was also provided in the related paper~\cite{angrisani2025simulating}, which leverages the deterministic low-weight truncation scheme previously employed in Refs.~\cite{aharonov2022, schuster2024, gonzalezgarcía2024paulipathsimulationsnoisy}.

\section{Setup}

As in \cite{fontana2023,rudolph2023,lerch2024}, the parametrized quantum circuits we consider are made of $m$ alternating layers of single-qubit rotations around the Z-axis and Clifford operations $C_i$. The rotations are parametrized by angles $\theta=(\theta_1,\ldots,\theta_m)$ and qubit labels $(q_1, 
\ldots, q_m)$ so that the single-qubit gate rotation at layer $i$ is given by $R_z^{(q_i)}(\theta_i)=e^{-j\theta_i/2Z_{q_i}}$. The circuits can be written in the following way, and are depicted in Figure \ref{fig:circuitLOWESA}.

\begin{equation}
    U(\mathbf{\theta})=\bigg(\prod_{i=1}^m C_iR_z^{(q_i)}(\theta_i)\bigg)C_0.
\end{equation}

Note that these circuits form a universal family of quantum circuits, and that the complexity of classically simulating such quantum circuits is exponential in the number of $R_z$ rotations $m$ \cite{Bravyi_2016}. In the following discussions, we will focus only on the gate noise affecting the rotations, as it is often the case that noise on the Clifford operations can be more easily managed since they can be implemented transversally \cite{Bravyi2005}. For simplicity, the noise channel we consider is the amplitude damping channel $\mathcal{N}_{AD}$. However, we show in the supplemental material that our results hold for most single-qubit noise models considered in the literature, including the composition and probabilistic combinations of amplitude damping, dephasing and depolarizing.
Our noisy quantum circuit is thus represented by the channel

\begin{equation}\label{eq:unitaryLOWESA}
    \mathcal{U_{\theta}}=\bigg(\bigcirc_{i=1}^m\mathcal{C}_i\circ \mathcal{R}^{(q_i)}_z(\theta_i)\circ\mathcal{N}_{AD}\bigg)\circ\mathcal{C}_0,
\end{equation}
where we always denote in caligraphic font the channel given by conjugation by the corresponding unitary.

\begin{figure}
    \centering
    \includegraphics[width=0.46\textwidth]{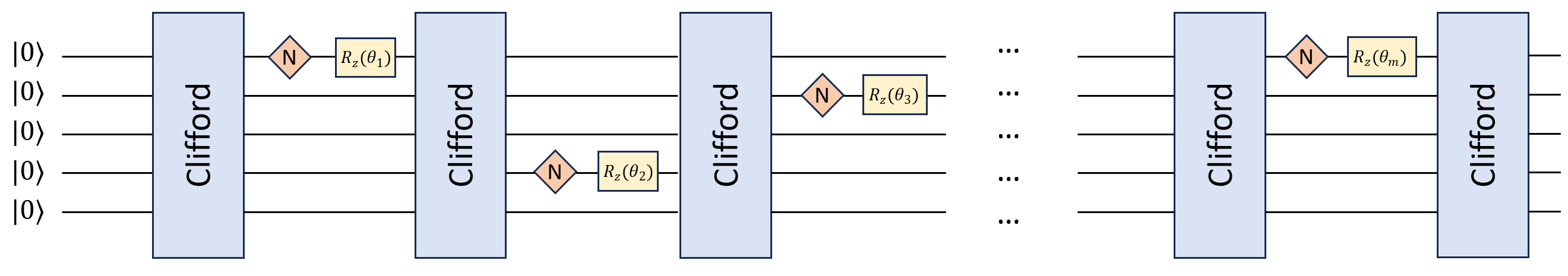}
    \caption{Representation of parametrised quantum circuits simulated by LOWESA, made of alternating layers of Clifford operations,  amplitude-damping noise and single-qubit rotations}
    \label{fig:circuitLOWESA}
\end{figure}

In most VQAs, the algorithm requires applying the quantum circuit to the initial state $\ket{0}^{\otimes n}$ and then measuring an observable~\cite{farhi2014,Peruzzo_2014,Cerezo_2022}. Our classical algorithm will estimate the expectation value of this observable. Denote by $\mathbb{P}$ the set of Pauli matrices $\mathbb{P}=\{I, X, Y, Z\}$ and by $P$ a Pauli string in $\mathbb{P}^{\otimes n}$, which will be the observable we consider. In practice, if the observable is not a Pauli string, one can always still decompose it as a sum of Pauli strings and then run the simulation procedure for each of the Pauli strings, as long as there are polynomially many of them. Our quantity of interest is the expectation value as a function of the parameters $\theta\in[0,2\pi]^m$:

\begin{equation}
    f(\theta)=\tr(\mathcal{U_{\theta}}(\ketbra{0})P).
\end{equation}

We will approximate this expectation value by $\Tilde{f}(\theta)$ that can be computed in polynomial time. The approximation error we consider is the $L^2$-norm over the parameter space $\Theta=[0,2\pi]^m$:

\begin{equation}
    \begin{aligned}
    \Delta(f,\Tilde{f})&=||f-\Tilde{f}||_2\\
   &=\bigg(\frac{1}{|\Theta|}\int_{\Theta}|f(\theta)-\Tilde{f}(\theta)|^2d\theta\bigg)^{1/2}.
\end{aligned}
\end{equation}

Note that convergence in this metric does not imply that our approximation scheme succeeds for a given set of angles.

When backpropagating observables in the Heisenberg picture, it is often useful to use the Pauli Transfer Matrix (PTM) formalism \cite{Chow2012}, which we briefly introduce here. In the PTM, operators and states are represented in the Pauli basis, such that the PTM representation of a quantum state $\rho$ is the real valued vector $\ket{\rho}\rangle\in\mathbb{R}^{2^n}$ where $[\ket{\rho}\rangle]_i=\tr(\rho P_i)$ for $P_i\in\mathbb{P}^{\otimes n}$. Similarly, a quantum channel on $n$-qubits $\mathcal{E}$ is represented by the matrix $\mathbf{E}$ with $\mathbf{E}_{ij}=2^{-n}\tr(P_i\mathcal{E}(P_j))$. Using this formalism, it is clear that the expectation value of a Pauli string can be written as $\tr(P\mathcal{E}(\rho))=\langle\bra{P}\mathbf{E}\ket{\rho}\rangle$. Computing expectation values in the Heisenberg picture, where quantum channels act on measurement operators instead of quantum states, can be done using this formalism.

Indeed, the corresponding adjoint operation simply requires taking the transpose of the expectation value,
$\langle\bra{P}\mathbf{E}\ket{\rho}\rangle=\langle\bra{\rho}\mathbf{E}^T\ket{P}\rangle$. If $\mathcal{E}$ is made of multiple 2-qubit Clifford unitaries, this gives an efficient way of computing expectation values. Indeed, in the PTM, each Clifford unitary takes the Pauli operator $P$ to another Pauli operator $P'$, up to a phase. For 2-qubit unitaries keeping track of this change takes at most $\mathcal{O}(n^2)$ \cite{Aaronson_2004}. As long as there are $\textrm{poly}(n)$ 2-qubit Clifford unitaries, this gives us an efficient way of computing expectation values of Clifford circuits in the Heisenberg picture. Handling the rotations, which are not Clifford gates, requires more work and the use of the structure of the noise channel.

\section{Main result}

To deal with the rotations, we start by noting that the adjoint of the amplitude damping channel with damping parameter $\gamma>0$ can easily be defined in the PTM since,

\begin{equation}
\begin{aligned}
    \mathcal{N}^\dagger_{AD}(X)&=\sqrt{1-\gamma}X, \quad \mathcal{N}^\dagger_{AD}(Y)=\sqrt{1-\gamma}Y,\\
    \quad \mathcal{N}^\dagger_{AD}(Z)&=(1-\gamma)Z+\gamma I,  \quad \mathcal{N}_{AD}^\dagger(I)=I
\end{aligned}.
\end{equation}

Let $\mathbf{R_z}(\theta)$ be the PTM representation of the adjoint of $\mathcal{R}_z(\theta)$ and $\mathbf{N_{AD}}$ be the PTM representation of the adjoint of the noise channel. The noisy rotations can be explictly computed,

\begin{equation}
\begin{aligned}
 \hspace{-0.1cm}\mathbf{R_Z}\cdot&\mathbf{N_{AD}}=\begin{bmatrix}
1 & 0 & 0 & \gamma \\
0 & \sqrt{1-\gamma}\cos{\theta} & -\sqrt{1-\gamma}\sin{\theta} &0 \\ 
0 & \sqrt{1-\gamma}\sin{\theta} & \sqrt{1-\gamma}\cos{\theta} & 0 \\
0 & 0 & 0 & 1-\gamma
\end{bmatrix}\\
&= \mathbf{D_0} + \mathbf{D_{0_Z}} + \mathbf{D_{0_I}} + \mathbf{D_{1}} \cos{\theta} + \mathbf{D_{-1}} \sin{\theta}
\end{aligned}
\end{equation}

\noindent where we have defined the quantum processes $\mathbf{D_0}=\ketbra{I}$, $\mathbf{D_{0_Z}}=(1-\gamma)\ketbra{Z}$, $\mathbf{D_{0_I}}=\gamma\ket{I}\bra{Z}$, $\mathbf{D_1}=\sqrt{1-\gamma}(\ketbra{X}+\ketbra{Y})$ and $\mathbf{D_{-1}}=\sqrt{1-\gamma}(\ket{Y}\bra{X}-\ket{X}\bra{Y})$, where $\ket{I} = [1 \ 0 \ 0 \ 0]^T, \ket{X}  = [0 \ 1 \ 0 \ 0]^T \ldots$

Therefore, when backpropagating through the circuit in the Heisenberg picture, the noisy rotation on qubit $q_i$ acts in the following way. If $P_{q_i}=I$, nothing happens, and $P_{q_i}$ remains unchanged through $\mathbf{D_0}$. If $P_{q_i}=X$, then $P_{q_i}$ is mapped to $P'_{q_i}=\sqrt{1-\gamma}\cos{\theta}X$ through process $\mathbf{D_1}$ or to $P'_{q_i}=\sqrt{1-\gamma}\sin{\theta}Y$ through process $\mathbf{D_{-1}}$. Similarly, if $P_{q_i}=Y$, then it is mapped to $P'_{q_i}=\sqrt{1-\gamma}\cos{\theta}Y$ through process $\mathbf{D_1}$ or to $P'_{q_i}=-\sqrt{1-\gamma}\sin{\theta}X$ through process $\mathbf{D_{-1}}$. Finally, if $P_{q_i}=Z$, then it is either mapped to $P'_{q_i}=\gamma I$ through $\mathbf{D_{0_I}}$ or to $P'_{q_i}=(1-\gamma)Z$ through $\mathbf{D_{0_Z}}$.

Except for $I$, we see that when going through the noisy rotation, every Pauli is split into two downscaled Paulis. This gives us a way of computing the expectation value of our noisy circuit. Start with our observable, which is a Pauli string $P$, then backpropagate it through each layer backwards, alternating layers of Clifford unitaries and single-qubit rotations. The Clifford unitaries update the Pauli string into another Pauli string - up to a phase factor - which can be computed in time $\mathcal{O}(n^2)$. The single-qubit rotation $R_z^{(q_i)}$ takes the Pauli string $P$ into potentially 2 paths, corresponding to different Pauli strings differing on qubit $q_i$. We then keep track of the new paths created and reiterate this procedure. This creates a tree structure $T$, depicted in Figure \ref{fig:PathsSplitting},  which may have up to $2^m$ branches, where $m$ is our number of parameterized rotations.

\begin{figure}
    \includegraphics[width=0.37\textwidth]{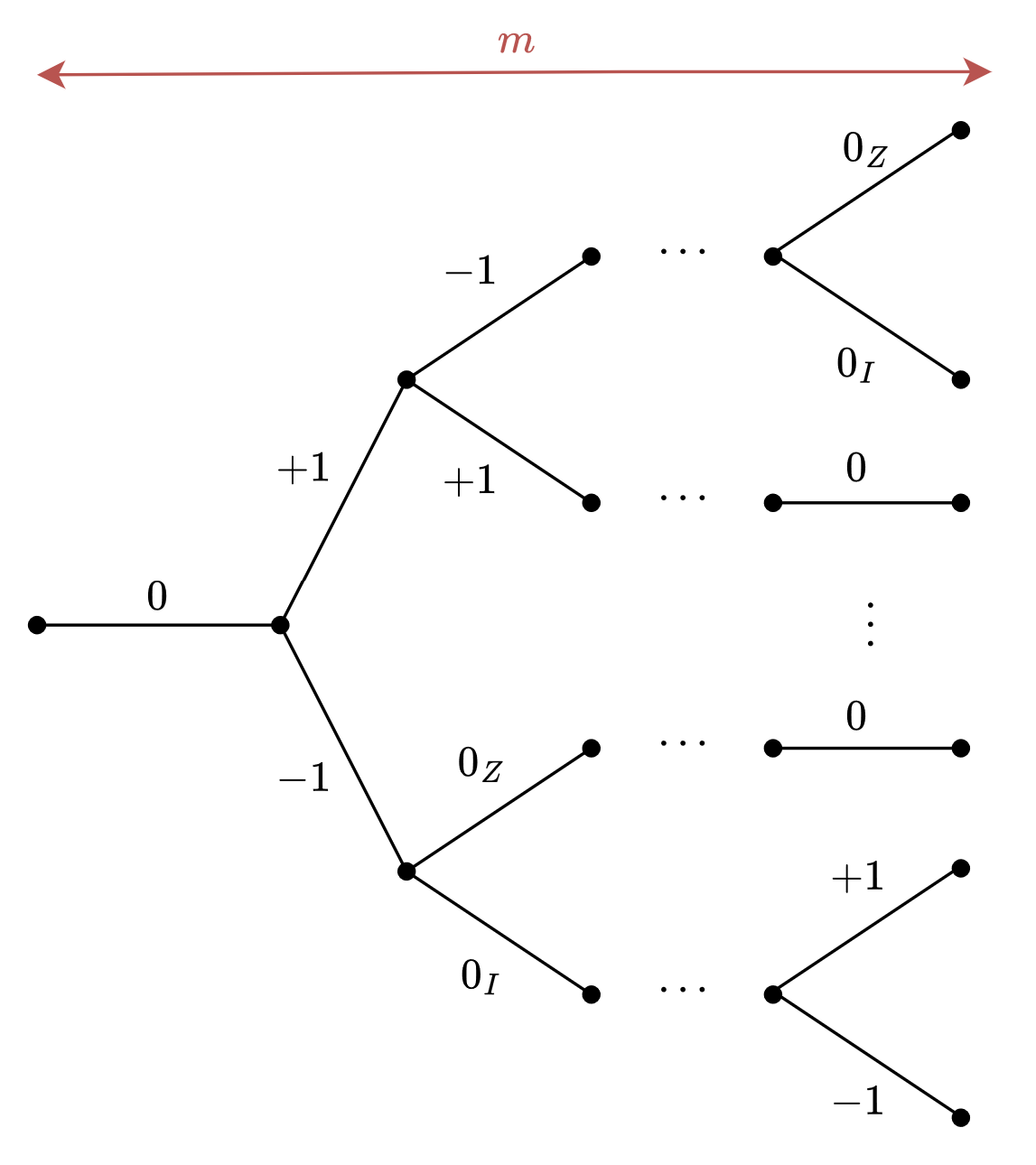}
    \caption{Tree corresponding to the backpropagation of a single  Pauli observable through the circuit. Each node corresponds to a single Pauli string backpropagated through the quantum circuits through the quantum processes $\mathcal{D}_\omega$.}
    \label{fig:PathsSplitting}
\end{figure}

Similarly to the unital case \cite{fontana2023}, formally decomposing our circuit using this tree structure requires us to keep track at each layer of both the rotation angles and which process was encountered for each path. This is done by storing, for each path (i.e. each branch of the tree), the processes that were encountered during the backpropagation in a vector $\omega \in \{0,0_Z,0_I,\pm 1\}^m$ and by storing separately the rotation angles. These two quantities are then combined using the convenient trigonometric monomials $\phi_{\omega_i}(\theta_i)$, with $\phi_0(\theta_i)=\phi_{0_Z}(\theta_i)=\phi_{0_I}(\theta_i)=1$, $\phi_{1}(\theta_i)=\cos{\theta_i}$, and $\phi_{-1}(\theta_i)=\sin{\theta_i}$. By defining the quantity $\Phi_\omega(\theta)=\prod_{i=1}^m \phi_{\omega_i}(\theta_i)$, our quantum circuit can be decomposed using the quantum processes $\mathcal{D}_\omega$, which we have previously defined in the PTM:

\begin{equation}
    \mathcal{U}_{\theta}=\sum_{\omega \in \{0,0_z,0_I,1,-1\}^m} \Phi_\omega(\theta)\mathcal{D}_\omega.
\end{equation}

\noindent Where each process $\mathcal{D}_\omega$ is given by $\mathcal{D}_\omega=(\bigcirc_{i=1}\mathcal{C}_i\circ \mathcal{D}_{\omega_i})\circ \mathcal{C}_0$. In this setting, our parametrized expectation value $f$ can then be written as:

\begin{equation}
f(\theta)=\tr(\mathcal{U_{\theta}}[\ketbra{0}]P)=\sum_\omega d_\omega\Phi_\omega(\theta)
\end{equation}

\noindent where the Fourier coefficients $d_\omega$ are given by,

\begin{equation}
    d_\omega=\tr(\mathcal{D_{\omega}}[\ketbra{0}]P)=\langle \bra{P}\mathbf{D_\omega}\ket{0}\rangle.
\end{equation}

Recall that when backpropagating a Pauli string $P$, the rotation at each layer acts only on a single qubit $q_i$. Depending on the Pauli operator at qubit $q_i$, at most two processes are valid: $\mathbf{D_0}$ for $I$, $\mathbf{D_{0_Z}}$ and $\mathbf{D_{0_I}}$ for $Z$, and $\mathbf{D_{\pm 1}}$ for $X$ and $Y$. 

To approximate this expectation value, we propose two different methods, each with its own caveats and advantages. The first method, Algorithm \ref{alg:Algorithm}, considers only the paths in the tree $T$ that have split fewer than $\ell$ times, where $\ell$ is a chosen cut-off parameter. Indeed, each split of the paths dampens the corresponding Pauli string, such that discarding paths that have split sufficiently could yield a good approximation.  Let $\#\omega$ denote the number of times a given path $\omega$ has split. Algorithm \ref{alg:Algorithm} approximates the function $f$ by the quantity:

\begin{equation}
    \Tilde{f}(\theta)=\sum_{\substack{\omega \in T \\ \#\omega\leq\ell}} d_\omega\Phi_\omega(\theta)
\end{equation}

\noindent which is computed in polynomial time for $\ell=\mathcal{O}(1)$. However, unlike in the case of unital noise, it is uncertain whether each split effectively dampens the expectation value of the Pauli observable. Specifically, when a split occurs due to encountering a Pauli $Z$ at qubit $q_i$, it results in two new branches through processes $\mathbf{D_{0_I}}$ and $\mathbf{D_{0_Z}}$. In the case of amplitude damping alone, this split may leave the approximation error unchanged. Therefore, it is essential to distinguish between splits occurring through processes $\mathbf{D_{\pm 1}}$, which always dampen the approximation error, and those occurring through $\mathbf{D_{0_Z}}$ and $\mathbf{D_{0_I}}$. This is done by noting which splits occur on discarded paths when doing the backpropagation.

The second approach, Algorithm \ref{alg:AlgorithmMC}, gets rid of this issue entirely by randomly sampling only one of the processes $\mathbf{D_{0_Z}}$/$\mathbf{D_{0_I}}$ in the tree. Any subtree $T_k$ obtained by this sampling procedure only contains splits that effectively dampen the expectation value of the observable. Each of the resulting trees are then truncated in the same manner after enough splitting has occurred. By repeating the randomized sampling process for $K$ independent trees $T_k$, our quantity of interest is approximated by the empirical average

\begin{equation}
        \Tilde{\hat{f}}(\theta)=\frac{1}{K}\sum_{k=1}^K\sum_{\substack{\omega\in T_k\\\#\omega\leq\ell}}\Phi_\omega(\theta)d_\omega
\end{equation}

We present the following theorem for the performance of both algorithms,

\begin{thm}\label{th:mainTheorem}
    Let $\mathcal{U}_\theta$ be a noisy quantum circuit as defined in Equation \ref{eq:unitaryLOWESA} with amplitude damping parameter $\gamma>0$ and $P$ a Pauli observable with expectation value $f(\theta)=\tr(\mathcal{U_{\theta}}[\ketbra{0}]P)$. For a given cut-off parameter $\ell$, Algorithm \ref{alg:Algorithm} outputs in time $\mathcal{O}(n^2m2^\ell)$ an approximation $\Tilde{f}$ and a lower bound $r$ on the minimum number of times the processes $\mathbf{D_{\pm 1}}$ have been encountered on every discarded branch during the backpropagation, such that,
    
    \begin{equation}\label{eq:certificateBound}
        \Delta(f,\Tilde{f})\leq  (1-\gamma)^{r/2}
    \end{equation}

     Given the same cut-off parameter $\ell$, and a sampling overhead $K$, Algorithm \ref{alg:AlgorithmMC} outputs in time $\mathcal{O}(Kn^2m2^\ell)$ an approximation $\hat{\Tilde{f}}$ that satisfies with probability at least $1-\delta$,
    \begin{equation}\label{eq:MCbound}
        \Delta(f,\hat{\Tilde{f}})\leq  (1-\gamma)^{(\ell+1)/2}+ \sqrt{\frac{2\log{(\delta^{-1}/2})}{K}}
    \end{equation}
    where the probability comes from the random sampling of the trees.
\end{thm}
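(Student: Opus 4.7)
The plan is to bound $\|f-\tilde f\|_2$ by Parseval's identity in the trigonometric basis, extract a factor of $(1-\gamma)^r$ from the $\mathbf{D}_{\pm 1}$ splits along every discarded branch, and show the residual sum is at most $1$ via Cauchy--Schwarz combined with a random-walk argument. The Monte Carlo variant then follows by combining the deterministic bound with Hoeffding's inequality.

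First, because $\{1,\cos\theta_i,\sin\theta_i\}$ is orthogonal on $[0,2\pi]$ and $\phi_0\equiv\phi_{0_Z}\equiv\phi_{0_I}\equiv 1$, I would collapse the three labels $\{0,0_Z,0_I\}$ into a single Fourier label and group paths $\omega$ by their Fourier signature $\omega'\in\{0,\pm 1\}^m$. Writing $s(\omega')$ for the number of non-zero entries, orthogonality yields
\begin{equation}
\|f-\tilde f\|_2^2 = \sum_{\omega'} 2^{-s(\omega')}\bigg(\sum_{\substack{\omega\text{ disc}\\ \omega\to\omega'}} d_\omega\bigg)^{\!2}.
\end{equation}
Crucially, every $\omega\to\omega'$ shares the same count $r_{\omega'}=s(\omega')$ of $\mathbf{D}_{\pm 1}$ processes, so I may factor $d_\omega = (\sqrt{1-\gamma})^{r_{\omega'}}\epsilon_\omega\alpha_\omega\mathds{1}[\text{overlap}_\omega]$, where $\alpha_\omega=\gamma^{s_I(\omega)}(1-\gamma)^{s_Z(\omega)}$ collects the $0_I/0_Z$ weights, $\epsilon_\omega\in\{\pm 1\}$ absorbs the sign from the $\mathbf{D}_{-1}$ and Clifford layers, and the indicator records whether the final Pauli lies in $\{I,Z\}^n$. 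Since the algorithm certifies $r_{\omega'}\geq r$ on every Fourier signature with any discarded path, $(1-\gamma)^{r_{\omega'}}\leq (1-\gamma)^r$.

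Second, I apply Cauchy--Schwarz with vectors $\sqrt{\alpha_\omega}\,\epsilon_\omega$ and $\sqrt{\alpha_\omega}\,\mathds{1}[\text{overlap}_\omega]$ to bound the squared inner sum by $\bigl(\sum_{\omega\to\omega'}\alpha_\omega\bigr)\bigl(\sum_{\omega\text{ disc},\,\omega\to\omega'}\alpha_\omega\mathds{1}[\text{overlap}_\omega]\bigr)$. The first factor is at most $1$ because, for fixed $\omega'$, summing $\alpha_\omega$ over consistent $0_Z/0_I$ choices is the total mass of a probabilistic sub-tree with $(1-\gamma)+\gamma=1$ at each $Z$-split. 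Substituting and exchanging the order of summation (using $s(\omega'_\omega)=r_\omega$),
\begin{equation}
\|f-\tilde f\|_2^2 \leq (1-\gamma)^r\sum_{\omega\text{ disc}} 2^{-r_\omega}\alpha_\omega\,\mathds{1}[\text{overlap}_\omega] \leq (1-\gamma)^r,
\end{equation}
where the last inequality is a second random-walk bound: layer-by-layer the contributions of $2^{-r_\omega}\alpha_\omega$ sum to exactly $1$ ($I$ trivially, $Z$ by $(1-\gamma)+\gamma$, $\pm 1$ by $\tfrac12+\tfrac12$), so the total over all $\omega$ is at most $1$ and the overlap restriction only decreases it. This yields the Algorithm 1 bound.

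For Algorithm 2, sampling each $Z$-split with probabilities $(1-\gamma)$ and $\gamma$ and compensating with importance weights produces an unbiased estimator of the deterministic truncation $f^{(\ell)}(\theta)=\sum_{r_\omega\leq\ell} d_\omega\Phi_\omega(\theta)$, since each sampled tree contains only $\mathbf{D}_{\pm 1}$ branches, making $\#\omega=r_\omega$. Re-applying the deterministic argument above with $r=\ell+1$ yields $\|f-f^{(\ell)}\|_2\leq (1-\gamma)^{(\ell+1)/2}$. Each single-tree estimator is bounded in $[-1,1]$, so Hoeffding's inequality for the empirical mean of $K$ independent samples produces the concentration term $\sqrt{2\log(2/\delta)/K}$ with probability at least $1-\delta$, and the triangle inequality combines the two contributions. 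The main technical obstacle is the Cauchy--Schwarz step together with the random-walk verification $\sum_{\omega\to\omega'}\alpha_\omega\leq 1$, which is precisely the refined combinatorial analysis alluded to in the introduction: it is necessary because non-unital $0_Z/0_I$ splits can place different $\omega\to\omega'$ paths on opposite sides of the kept/discarded cut within a single Fourier signature, breaking the clean Parseval orthogonality available in the unital setting of Fontana et al.
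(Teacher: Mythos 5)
Your proof is correct, and for the central combinatorial step it takes a genuinely different route from the paper. The paper bounds the pair sum $\sum_{h(\omega)=h(\omega')}2^{-|h(\omega)|}Q(\omega)Q(\omega')$ by an induction on the number of splits ($\alpha_{j+1}\leq c\,\alpha_j$), with a case analysis over how two branches with equal Fourier signature can co-evolve (same prefix vs.\ different prefix, $\mathbf{D}_{\pm1}$ vs.\ $\mathbf{D}_{0_Z}/\mathbf{D}_{0_I}$ splits); the certificate bound $\Delta\leq(1-\gamma)^{r/2}$ is then obtained by rerunning that induction while tracking which split types occur on discarded branches. You instead group the error by Fourier signature into $\sum_{\omega'}2^{-s(\omega')}\bigl(\sum_{\omega\to\omega'}d_\omega\bigr)^2$, pull out $(1-\gamma)^{s(\omega')}$, and control the remainder by a single Cauchy--Schwarz plus two telescoping probability-mass identities: $\sum_{\omega\to\omega'}\alpha_\omega\leq1$ (the $0_Z/0_I$ weights form a sub-probability tree once the $\pm1$ entries are pinned by $\omega'$) and $\sum_{\omega\in T}2^{-|h(\omega)|}\alpha_\omega=1$ (children's weights sum to the parent's at every split). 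This replaces the paper's delicate pairwise interaction analysis with two one-line random-walk arguments and yields the $(1-\gamma)^{r/2}$ certificate directly and self-containedly, which is arguably cleaner than the paper's ``same proof technique'' remark. Your Monte-Carlo decomposition also differs slightly --- you split the error as $\|f-f^{(\ell)}\|_2+\|f^{(\ell)}-\Tilde{\hat f}\|_2$ with a deterministic truncation term and Hoeffding applied to the truncated estimator, whereas the paper uses $\|f-\hat f\|_2+\|\hat f-\Tilde{\hat f}\|_2$ with a per-sampled-tree truncation bound --- but both give the same final estimate, and both share the paper's (unaddressed) looseness in passing from a pointwise Hoeffding bound to an $L^2$ statement and in asserting that each reweighted single-tree estimator lies in $[-1,1]$. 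The one thing you omit is the runtime accounting, which in the paper is a short counting argument ($2^\ell$ branches, $m$ layers, $\mathcal{O}(n^2)$ per Clifford update).
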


In particular, this means that $\textit{any}$ quantum circuit of our universal family can be efficiently simulated under non-unital noise in the $L^2$-error metric. This result greatly extends what was previously known for simulation under non-unital noise, eliminating many of the assumptions on random circuit/noise introduced in prior works \cite{mele2024,schuster2024}.  Furthermore, Equation \ref{eq:certificateBound} (and \ref{eq:randomBound} in Appendix \ref{sec:ProofConvergence}) show that for most quantum circuits, celebrated simulation methods \cite{fontana2023,rudolph2023,lerch2024} are expected to work also for non-unital noise. Indeed, although the formal guarantees we give in the result above are generally better for Alg. \ref{alg:AlgorithmMC}, since it may be that $r=0$ for some circuits and observables, Alg. \ref{alg:Algorithm} is closer to what is done in practice and is deterministic. We show with numerical evidence that in practice, one expects $r$ to scale linearly with $\ell$ (see Figure \ref{fig:expected_r} in Appendix \ref{sec:Algorithm}). Equation $\ref{eq:certificateBound}$ also gives a certificate on how good Algorithm \ref{alg:Algorithm} performs for a given circuit and observable $P$, based on the computed value of $r$.
We give here a sketch of the proof and refer to Appendices \ref{sec:ProofConvergence} and \ref{sec:MonteCarlo} for more detailed ones. The complexity of the algorithms is straightforward to derive. Computing the effect of a Clifford unitary takes $\mathcal{O}(n^2)$, and this needs to be done for all paths at every layer. Since there are at most $2^\ell$ paths and $m$ layers, the total complexity is at most $\mathcal{O}(n^2m2^\ell)$. For Algorithm \ref{alg:AlgorithmMC}, the Monte-Carlo simulation requires doing this process $K$ times, such that the runtime is $\mathcal{O}(Kn^2m2^\ell)$. Let's now focus on the approximation error made by the algorithm. The $L^2$-error made by truncating a tree can be rewritten (see Lemma \ref{eq:lemmaProof}):

\begin{equation}
    \begin{aligned}                 \Delta^2(\Tilde{f},f)&=\frac{1}{|\Theta|}\int_\Theta |\Tilde{f}(\theta)-f(\theta)|^2 d\theta\\
    &=\frac{1}{|\Theta|}\int_{\Theta}\big |\sum_{\substack{\omega\in T,\\\#\omega>\ell}}\Phi_\omega(\theta)d_\omega \big |^2d\theta\\
    &=\sum_{\substack{(\omega,\omega')\in T^2, \\h(\omega)=h(\omega'), \\ \#\omega>\ell, \#\omega'>\ell}}2^{-|h(\omega)|}d_\omega \Bar{d_{\omega'}}.
\end{aligned}
\end{equation}

\noindent The function $h$ transforms a vector $\omega$ into a vector $h(\omega)$ by assigning the same value, $0$, to $0$, $0_I$, and $0_Z$, while leaving the rest unchanged. The condition $h(\omega) = h(\omega')$ ensures that only few terms remain in the error sum, and results from the averaging over the rotation angles. The magnitude $|h(\omega)|$ counts the number of non-zero elements in $\omega$, \textit{i.e.} the number of times processes $\mathbf{D_{\pm 1}}$ have been encountered during the backpropagation on path $\omega$. The error depends on all pairs of paths $(\omega, \omega')$ that satisfy certain constraints. For each of these pairs, it is possible to factor out the noise introduced by the processes from $d_\omega$. By defining $Q(\omega) = \prod_{i=1}^m Q(\omega_i)$ as the noise accumulated through the splitting, where $Q(0_Z) = 1 - \gamma$, $Q(0_I) = \gamma$, $Q(+1) = Q(-1) = \sqrt{1 - \gamma}$, and $Q(0) = 1$, we can rewrite $d_\omega$ as $d_\omega = Q(\omega) d_\omega^0$. By definition of $d_\omega^0$, it is clear that for all $\omega$, $|d_\omega^0| \leq 1$. This allows us to derive a simple bound on the error:

\begin{equation}
\Delta^2(\Tilde{f},f)\leq\sum_{\substack{(\omega,\omega')\in T^2, \\h(\omega)=h(\omega'), \\ \#\omega>\ell, \#\omega'>\ell}}2^{-|h(\omega)|}Q(\omega)Q(\omega')
\end{equation}

The error can then be decomposed in two sums, one where $\omega=\omega'$, and one where $\omega \neq \omega'$ but $h(\omega)=h(\omega')$. We show that in both cases, splits occuring through the $\mathbf{D_{\pm 1}}$ processes dampen the approximation error by $(1-\gamma)$ and splits occuring through the processes $\mathbf{D_{0_I}}$ and $\mathbf{D_{0_Z}}$ leave the approximation error unchanged. To see this, take a branch of the tree which has had $j$ splits (denote the branch length at this point by $i$), and go through the tree until one more split occurs on this branch. We show that the two resulting branches' contribution to the error term is less than the contribution of the original branch (by computing $Q(\omega)Q(\omega')$ and $2^{-|h(\omega)|}$) in both the cases introduced before, illustrated in Figure \ref{fig:Xsplit} and \ref{fig:XYsplit}. When a pair of paths $(\omega,\omega')$ has a nonzero contribution to the error term, \textit{i.e.} when $h(\omega)=h(\omega')$, we say that these paths "interact". 

\begin{itemize}
    \item \underline{$\omega_{1,\ldots,i}=\omega_{1,\ldots,i}'$:} If the split occurs through processes $\mathbf{D_{\pm 1}}$, then the two resulting branches can only interact with themselves in order to satisfy $h(\omega)=h(\omega')$, such that the damping introduced is $2\times(1-\gamma)2^{-1}=1-\gamma$. If the split comes from processes $\mathbf{D_{0_I}}$ and $\mathbf{D_{0_Z}}$, then there will be four terms in the sum as all pairs of paths will satisfy $h(\omega)=h(\omega')$. The approximation error remains unchanged since $\gamma^2+2\gamma(1-\gamma)+(1-\gamma)^2=1$.

    \begin{figure}     
    \centering
    \includegraphics[width=0.41\textwidth]{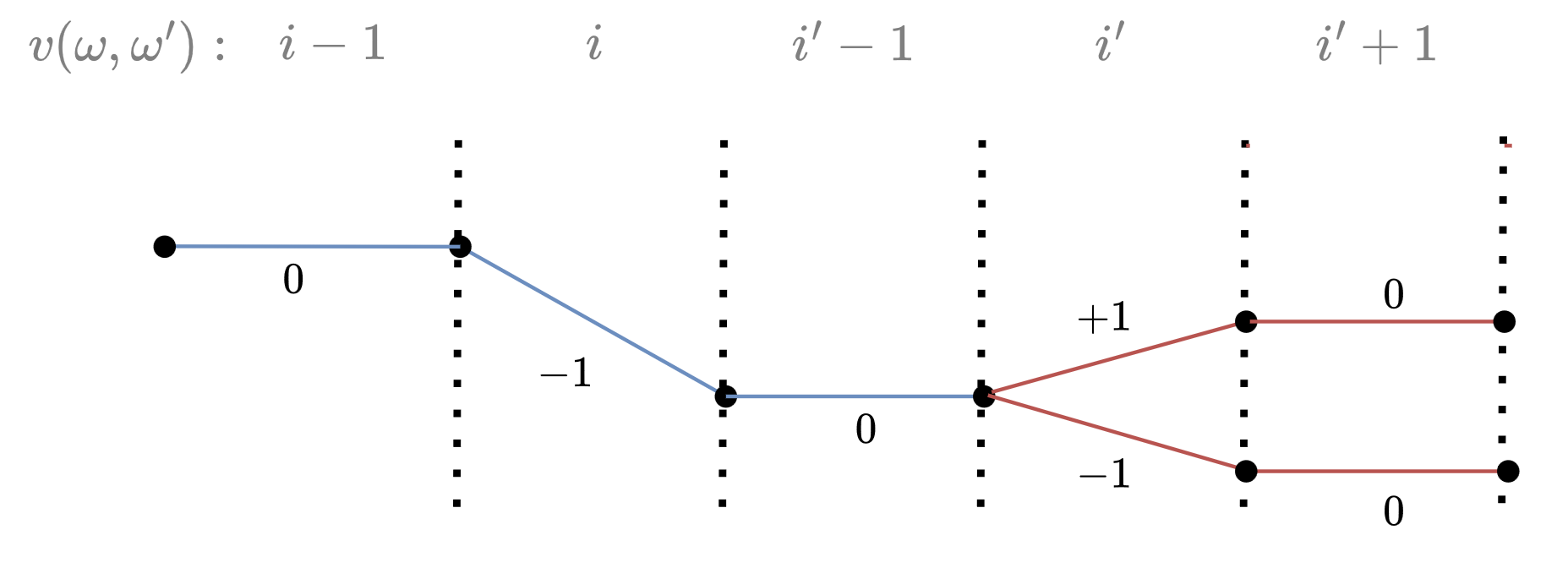}
    \caption{Branch of a tree resulting from the backpropagation of a Pauli observable, which splits through the process $\mathbf{D_{\pm1}}$.}
    \label{fig:Xsplit}
    \end{figure}

    \item \underline{$\omega_{1,\ldots,i}\neq\omega_{1,\ldots,i}'$:} The cross interactions between different branches are more subtle, as a different split (or no split) may occur on one of the branches. First, note that if one of the branches does not split through process $\mathbf{D_0}$, then to satisfy $h(\omega)=h(\omega')$, the only valid split for the other branch is through processes $\mathbf{D_{0_Z}}$ and $\mathbf{D_{0_I}}$. In this case, the two new branches will interact with the single branch and the error is unchanged since $1\times \gamma+1\times (1-\gamma)=1$. If both branches split, then to satisfy $h(\omega)=h(\omega')$, they either both split through processes $\mathbf{D_{\pm 1}}$ or both split through processes $\mathbf{D_{0_Z}}$ and $\mathbf{D_{0_I}}$. The first case leads to two interactions ($-1$ with $-1$ and $+1$ with $+1$), introducing the same damping as before $1-\gamma$. Similarly, the second case leaves the error unchanged since $\gamma^2+2\gamma(1-\gamma)+(1-\gamma)^2=1$.

\begin{figure}   
    \centering
    \includegraphics[width=0.41\textwidth]{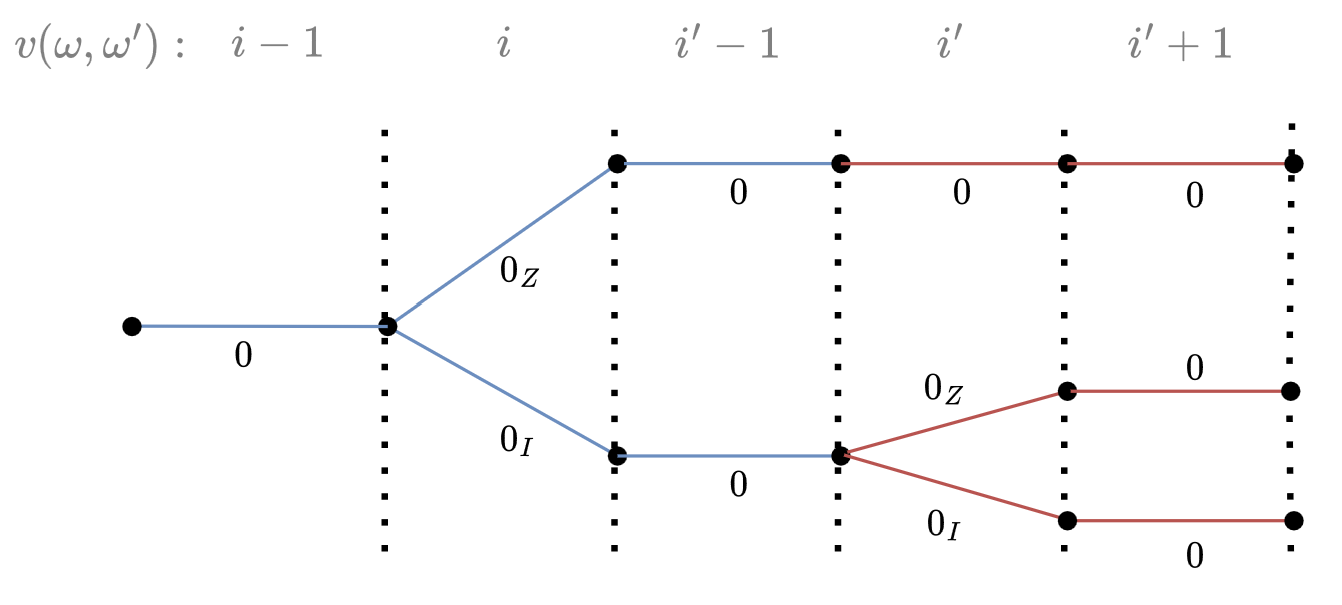}
    \includegraphics[width=0.39\textwidth]{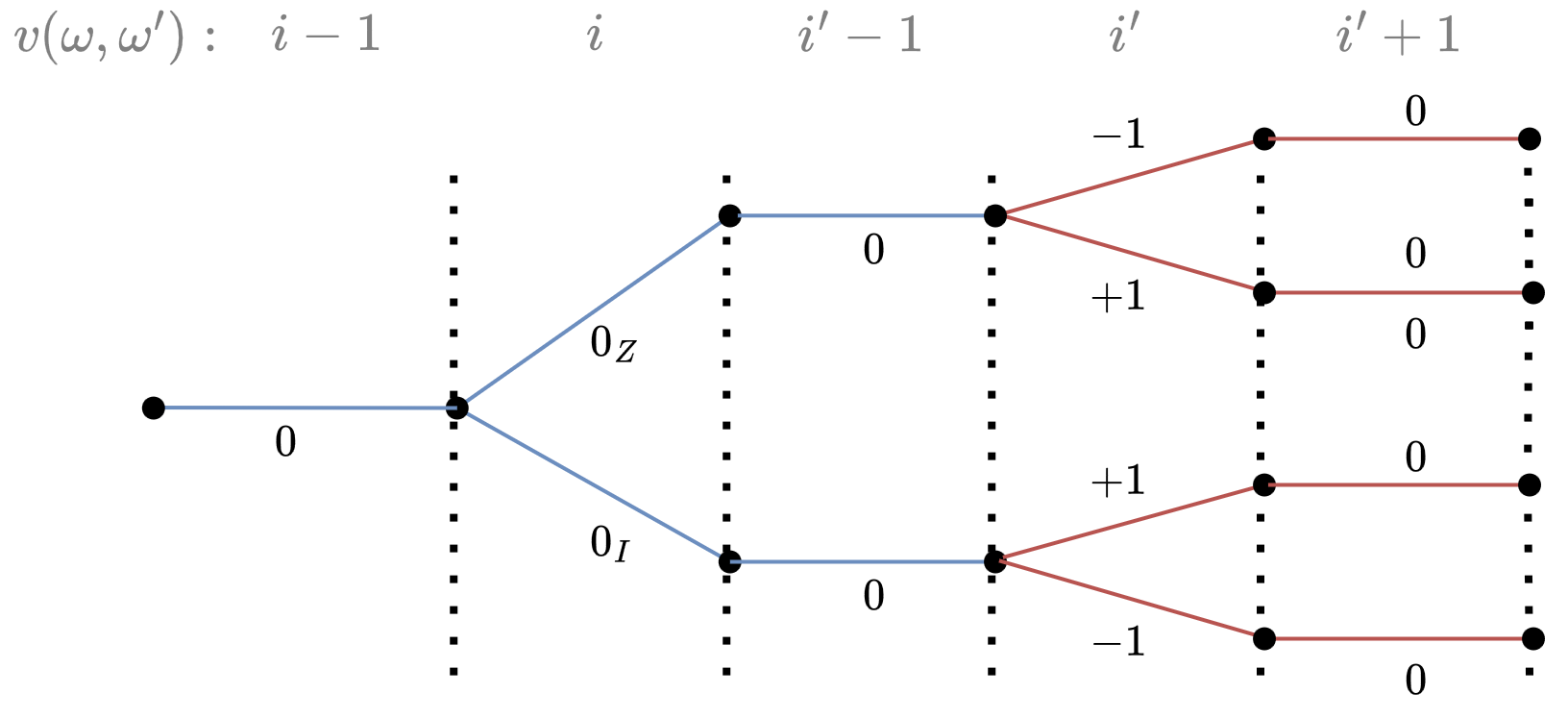}
    \caption{Two different branches of a tree resulting from the backpropogation of a Pauli observable that satisfy $h(\omega_{1,\ldots,i})=h(\omega_{1,\ldots,i}')$. To satisfy $h(\omega)=h(\omega')$ on the entire branches, only few pairs of processes are possible.}
    \label{fig:XYsplit}
    \end{figure}
\end{itemize}

Everytime a split occurs through processes $\mathbf{D_{\pm 1}}$, its total contribution is damped by a factor $1-\gamma$. Similarly, when a split occurs through processes $\mathbf{D_{0_Z}}$ or $\mathbf{D_{0_I}}$, the approximation error is unchanged. Counting for each discarded branch the number of times that processes $\mathbf{D_{\pm 1}}$ have been encountered allows us to retrieve Equation \ref{eq:certificateBound}. The convergence of the Monte-Carlo approach, Equation \ref{eq:MCbound}, can be derived similarly. The error made is decomposed in an error term arising from the Monte-Carlo approximation which scales in $1/\sqrt{K}$, our estimator being unbiased, and a truncation error (see Appendix \ref{sec:MonteCarlo}). The truncation error for each of the sampled trees is easily bounded by $(1-\gamma)^{(\ell+1)/2}$ using the same analysis and the fact that all splits in those trees occur through processes $\mathbf{D_{\pm1}}$.

\section{Conclusion}

In this work, we have shown that Pauli backpropagation methods allow us to efficiently compute the expectation value of Pauli observables under non-unital noise. This simulation method creates a surrogate of the energy landscape by only backpropagating the observable once in the Heisenberg picture. Expectation values for given angles can then be quickly obtained by just computing the trigonometric monomials. This simulation method is parallelizable and memory efficient, and does not require any assumptions on the locality of the observable, on the geometry of the underlying quantum hardware or on the depth of the quantum circuit at hand. This algorithm is particularly adapted to PQCs, as the structure strongly resemble the one commonly used for such circuits. Thus, our work establishes that Pauli backpropagation is an effective and efficient method to simulate noisy quantum devices under realistic hardware noise assumptions.

Furthermore, even though results like the quantum refrigerator~\cite{benor2013} show that the hardware noise can have significant impact on its computational power, our result clearly indicate that this difference only arises for highly structured circuits, like the ones used in fault-tolerant constructions. Finally, it would be interesting to show similar results to simulate noisy quantum simulators.

\medskip

\paragraph*{Acknowledgements:} The authors thank Manuel S. Rudolph, Zoë Holmes and Antonio Anna Mele for useuful feedback and discussion. This work benefited from a government grant managed by the Agence Nationale de la Recherche under the Plan France 2030 with the reference ANR-22-PETQ-0007. AA acknowledges support from the Sandoz Family Foundation-Monique de Meuron program for Academic Promotion.

\bibliographystyle{apsrev4-2}
\bibliography{references}

\begin{thebibliography}{35}%
\makeatletter
\providecommand \@ifxundefined [1]{%
 \@ifx{#1\undefined}
}%
\providecommand \@ifnum [1]{%
 \ifnum #1\expandafter \@firstoftwo
 \else \expandafter \@secondoftwo
 \fi
}%
\providecommand \@ifx [1]{%
 \ifx #1\expandafter \@firstoftwo
 \else \expandafter \@secondoftwo
 \fi
}%
\providecommand \natexlab [1]{#1}%
\providecommand \enquote  [1]{``#1''}%
\providecommand \bibnamefont  [1]{#1}%
\providecommand \bibfnamefont [1]{#1}%
\providecommand \citenamefont [1]{#1}%
\providecommand \href@noop [0]{\@secondoftwo}%
\providecommand \href [0]{\begingroup \@sanitize@url \@href}%
\providecommand \@href[1]{\@@startlink{#1}\@@href}%
\providecommand \@@href[1]{\endgroup#1\@@endlink}%
\providecommand \@sanitize@url [0]{\catcode `\\12\catcode `\$12\catcode `\&12\catcode `\#12\catcode `\^12\catcode `\_12\catcode `\%12\relax}%
\providecommand \@@startlink[1]{}%
\providecommand \@@endlink[0]{}%
\providecommand \url  [0]{\begingroup\@sanitize@url \@url }%
\providecommand \@url [1]{\endgroup\@href {#1}{\urlprefix }}%
\providecommand \urlprefix  [0]{URL }%
\providecommand \Eprint [0]{\href }%
\providecommand \doibase [0]{https://doi.org/}%
\providecommand \selectlanguage [0]{\@gobble}%
\providecommand \bibinfo  [0]{\@secondoftwo}%
\providecommand \bibfield  [0]{\@secondoftwo}%
\providecommand \translation [1]{[#1]}%
\providecommand \BibitemOpen [0]{}%
\providecommand \bibitemStop [0]{}%
\providecommand \bibitemNoStop [0]{.\EOS\space}%
\providecommand \EOS [0]{\spacefactor3000\relax}%
\providecommand \BibitemShut  [1]{\csname bibitem#1\endcsname}%
\let\auto@bib@innerbib\@empty
\bibitem [{\citenamefont {Kim}\ \emph {et~al.}(2023)\citenamefont {Kim}, \citenamefont {Eddins}, \citenamefont {Anand}, \citenamefont {Wei}, \citenamefont {van~den Berg}, \citenamefont {Rosenblatt}, \citenamefont {Nayfeh}, \citenamefont {Wu}, \citenamefont {Zaletel}, \citenamefont {Temme},\ and\ \citenamefont {Kandala}}]{Kim2023}%
  \BibitemOpen
  \bibfield  {author} {\bibinfo {author} {\bibfnamefont {Y.}~\bibnamefont {Kim}}, \bibinfo {author} {\bibfnamefont {A.}~\bibnamefont {Eddins}}, \bibinfo {author} {\bibfnamefont {S.}~\bibnamefont {Anand}}, \bibinfo {author} {\bibfnamefont {K.~X.}\ \bibnamefont {Wei}}, \bibinfo {author} {\bibfnamefont {E.}~\bibnamefont {van~den Berg}}, \bibinfo {author} {\bibfnamefont {S.}~\bibnamefont {Rosenblatt}}, \bibinfo {author} {\bibfnamefont {H.}~\bibnamefont {Nayfeh}}, \bibinfo {author} {\bibfnamefont {Y.}~\bibnamefont {Wu}}, \bibinfo {author} {\bibfnamefont {M.}~\bibnamefont {Zaletel}}, \bibinfo {author} {\bibfnamefont {K.}~\bibnamefont {Temme}},\ and\ \bibinfo {author} {\bibfnamefont {A.}~\bibnamefont {Kandala}},\ }\href@noop {} {\bibfield  {journal} {\bibinfo  {journal} {Nature}\ }\textbf {\bibinfo {volume} {618}},\ \bibinfo {pages} {500} (\bibinfo {year} {2023})}\BibitemShut {NoStop}%
\bibitem [{\citenamefont {Yu}\ \emph {et~al.}(2023)\citenamefont {Yu}, \citenamefont {Zhao},\ and\ \citenamefont {Wei}}]{Yu2023}%
  \BibitemOpen
  \bibfield  {author} {\bibinfo {author} {\bibfnamefont {H.}~\bibnamefont {Yu}}, \bibinfo {author} {\bibfnamefont {Y.}~\bibnamefont {Zhao}},\ and\ \bibinfo {author} {\bibfnamefont {T.-C.}\ \bibnamefont {Wei}},\ }\href {https://doi.org/10.1103/PhysRevResearch.5.013183} {\bibfield  {journal} {\bibinfo  {journal} {Phys. Rev. Res.}\ }\textbf {\bibinfo {volume} {5}},\ \bibinfo {pages} {013183} (\bibinfo {year} {2023})}\BibitemShut {NoStop}%
\bibitem [{\citenamefont {Carrera~Vazquez}\ \emph {et~al.}(2024)\citenamefont {Carrera~Vazquez}, \citenamefont {Tornow}, \citenamefont {Ristè}, \citenamefont {Woerner}, \citenamefont {Takita},\ and\ \citenamefont {Egger}}]{CarreraVazquez2024}%
  \BibitemOpen
  \bibfield  {author} {\bibinfo {author} {\bibfnamefont {A.}~\bibnamefont {Carrera~Vazquez}}, \bibinfo {author} {\bibfnamefont {C.}~\bibnamefont {Tornow}}, \bibinfo {author} {\bibfnamefont {D.}~\bibnamefont {Ristè}}, \bibinfo {author} {\bibfnamefont {S.}~\bibnamefont {Woerner}}, \bibinfo {author} {\bibfnamefont {M.}~\bibnamefont {Takita}},\ and\ \bibinfo {author} {\bibfnamefont {D.~J.}\ \bibnamefont {Egger}},\ }\href {https://doi.org/10.1038/s41586-024-08178-2} {\bibfield  {journal} {\bibinfo  {journal} {Nature}\ }\textbf {\bibinfo {volume} {636}},\ \bibinfo {pages} {75–79} (\bibinfo {year} {2024})}\BibitemShut {NoStop}%
\bibitem [{\citenamefont {Glick}\ \emph {et~al.}(2024)\citenamefont {Glick}, \citenamefont {Gujarati}, \citenamefont {Córcoles}, \citenamefont {Kim}, \citenamefont {Kandala}, \citenamefont {Gambetta},\ and\ \citenamefont {Temme}}]{Glick2024}%
  \BibitemOpen
  \bibfield  {author} {\bibinfo {author} {\bibfnamefont {J.~R.}\ \bibnamefont {Glick}}, \bibinfo {author} {\bibfnamefont {T.~P.}\ \bibnamefont {Gujarati}}, \bibinfo {author} {\bibfnamefont {A.~D.}\ \bibnamefont {Córcoles}}, \bibinfo {author} {\bibfnamefont {Y.}~\bibnamefont {Kim}}, \bibinfo {author} {\bibfnamefont {A.}~\bibnamefont {Kandala}}, \bibinfo {author} {\bibfnamefont {J.~M.}\ \bibnamefont {Gambetta}},\ and\ \bibinfo {author} {\bibfnamefont {K.}~\bibnamefont {Temme}},\ }\href {https://doi.org/10.1038/s41567-023-02340-9} {\bibfield  {journal} {\bibinfo  {journal} {Nature Physics}\ }\textbf {\bibinfo {volume} {20}},\ \bibinfo {pages} {479–483} (\bibinfo {year} {2024})}\BibitemShut {NoStop}%
\bibitem [{\citenamefont {Cerezo}\ \emph {et~al.}(2021)\citenamefont {Cerezo}, \citenamefont {Arrasmith}, \citenamefont {Babbush}, \citenamefont {Benjamin}, \citenamefont {Endo}, \citenamefont {Fujii}, \citenamefont {McClean}, \citenamefont {Mitarai}, \citenamefont {Yuan}, \citenamefont {Cincio},\ and\ \citenamefont {Coles}}]{Cerezo2021}%
  \BibitemOpen
  \bibfield  {author} {\bibinfo {author} {\bibfnamefont {M.}~\bibnamefont {Cerezo}}, \bibinfo {author} {\bibfnamefont {A.}~\bibnamefont {Arrasmith}}, \bibinfo {author} {\bibfnamefont {R.}~\bibnamefont {Babbush}}, \bibinfo {author} {\bibfnamefont {S.~C.}\ \bibnamefont {Benjamin}}, \bibinfo {author} {\bibfnamefont {S.}~\bibnamefont {Endo}}, \bibinfo {author} {\bibfnamefont {K.}~\bibnamefont {Fujii}}, \bibinfo {author} {\bibfnamefont {J.~R.}\ \bibnamefont {McClean}}, \bibinfo {author} {\bibfnamefont {K.}~\bibnamefont {Mitarai}}, \bibinfo {author} {\bibfnamefont {X.}~\bibnamefont {Yuan}}, \bibinfo {author} {\bibfnamefont {L.}~\bibnamefont {Cincio}},\ and\ \bibinfo {author} {\bibfnamefont {P.~J.}\ \bibnamefont {Coles}},\ }\href {https://doi.org/10.1038/s42254-021-00348-9} {\bibfield  {journal} {\bibinfo  {journal} {Nature Reviews Physics}\ }\textbf {\bibinfo {volume} {3}},\ \bibinfo {pages} {625–644} (\bibinfo {year} {2021})}\BibitemShut {NoStop}%
\bibitem [{\citenamefont {Bharti}\ \emph {et~al.}(2022)\citenamefont {Bharti}, \citenamefont {Cervera-Lierta}, \citenamefont {Kyaw}, \citenamefont {Haug}, \citenamefont {Alperin-Lea}, \citenamefont {Anand}, \citenamefont {Degroote}, \citenamefont {Heimonen}, \citenamefont {Kottmann}, \citenamefont {Menke}, \citenamefont {Mok}, \citenamefont {Sim}, \citenamefont {Kwek},\ and\ \citenamefont {Aspuru-Guzik}}]{Bharti_2022}%
  \BibitemOpen
  \bibfield  {author} {\bibinfo {author} {\bibfnamefont {K.}~\bibnamefont {Bharti}}, \bibinfo {author} {\bibfnamefont {A.}~\bibnamefont {Cervera-Lierta}}, \bibinfo {author} {\bibfnamefont {T.~H.}\ \bibnamefont {Kyaw}}, \bibinfo {author} {\bibfnamefont {T.}~\bibnamefont {Haug}}, \bibinfo {author} {\bibfnamefont {S.}~\bibnamefont {Alperin-Lea}}, \bibinfo {author} {\bibfnamefont {A.}~\bibnamefont {Anand}}, \bibinfo {author} {\bibfnamefont {M.}~\bibnamefont {Degroote}}, \bibinfo {author} {\bibfnamefont {H.}~\bibnamefont {Heimonen}}, \bibinfo {author} {\bibfnamefont {J.~S.}\ \bibnamefont {Kottmann}}, \bibinfo {author} {\bibfnamefont {T.}~\bibnamefont {Menke}}, \bibinfo {author} {\bibfnamefont {W.-K.}\ \bibnamefont {Mok}}, \bibinfo {author} {\bibfnamefont {S.}~\bibnamefont {Sim}}, \bibinfo {author} {\bibfnamefont {L.-C.}\ \bibnamefont {Kwek}},\ and\ \bibinfo {author} {\bibfnamefont {A.}~\bibnamefont {Aspuru-Guzik}},\ }\bibfield  {journal} {\bibinfo  {journal} {Reviews of Modern Physics}\ }\textbf {\bibinfo {volume}
  {94}},\ \href {https://doi.org/10.1103/revmodphys.94.015004} {10.1103/revmodphys.94.015004} (\bibinfo {year} {2022})\BibitemShut {NoStop}%
\bibitem [{\citenamefont {Farhi}\ \emph {et~al.}(2014)\citenamefont {Farhi}, \citenamefont {Goldstone},\ and\ \citenamefont {Gutmann}}]{farhi2014}%
  \BibitemOpen
  \bibfield  {author} {\bibinfo {author} {\bibfnamefont {E.}~\bibnamefont {Farhi}}, \bibinfo {author} {\bibfnamefont {J.}~\bibnamefont {Goldstone}},\ and\ \bibinfo {author} {\bibfnamefont {S.}~\bibnamefont {Gutmann}},\ }\href {https://arxiv.org/abs/1411.4028} {\bibinfo {title} {A quantum approximate optimization algorithm}} (\bibinfo {year} {2014}),\ \Eprint {https://arxiv.org/abs/1411.4028} {arXiv:1411.4028 [quant-ph]} \BibitemShut {NoStop}%
\bibitem [{\citenamefont {Peruzzo}\ \emph {et~al.}(2014)\citenamefont {Peruzzo}, \citenamefont {McClean}, \citenamefont {Shadbolt}, \citenamefont {Yung}, \citenamefont {Zhou}, \citenamefont {Love}, \citenamefont {Aspuru-Guzik},\ and\ \citenamefont {O’Brien}}]{Peruzzo_2014}%
  \BibitemOpen
  \bibfield  {author} {\bibinfo {author} {\bibfnamefont {A.}~\bibnamefont {Peruzzo}}, \bibinfo {author} {\bibfnamefont {J.}~\bibnamefont {McClean}}, \bibinfo {author} {\bibfnamefont {P.}~\bibnamefont {Shadbolt}}, \bibinfo {author} {\bibfnamefont {M.-H.}\ \bibnamefont {Yung}}, \bibinfo {author} {\bibfnamefont {X.-Q.}\ \bibnamefont {Zhou}}, \bibinfo {author} {\bibfnamefont {P.~J.}\ \bibnamefont {Love}}, \bibinfo {author} {\bibfnamefont {A.}~\bibnamefont {Aspuru-Guzik}},\ and\ \bibinfo {author} {\bibfnamefont {J.~L.}\ \bibnamefont {O’Brien}},\ }\bibfield  {journal} {\bibinfo  {journal} {Nature Communications}\ }\textbf {\bibinfo {volume} {5}},\ \href {https://doi.org/10.1038/ncomms5213} {10.1038/ncomms5213} (\bibinfo {year} {2014})\BibitemShut {NoStop}%
\bibitem [{\citenamefont {Cerezo}\ \emph {et~al.}(2022)\citenamefont {Cerezo}, \citenamefont {Sharma}, \citenamefont {Arrasmith},\ and\ \citenamefont {Coles}}]{Cerezo_2022}%
  \BibitemOpen
  \bibfield  {author} {\bibinfo {author} {\bibfnamefont {M.}~\bibnamefont {Cerezo}}, \bibinfo {author} {\bibfnamefont {K.}~\bibnamefont {Sharma}}, \bibinfo {author} {\bibfnamefont {A.}~\bibnamefont {Arrasmith}},\ and\ \bibinfo {author} {\bibfnamefont {P.~J.}\ \bibnamefont {Coles}},\ }\bibfield  {journal} {\bibinfo  {journal} {npj Quantum Information}\ }\textbf {\bibinfo {volume} {8}},\ \href {https://doi.org/10.1038/s41534-022-00611-6} {10.1038/s41534-022-00611-6} (\bibinfo {year} {2022})\BibitemShut {NoStop}%
\bibitem [{\citenamefont {Bowles}\ \emph {et~al.}(2024)\citenamefont {Bowles}, \citenamefont {Ahmed},\ and\ \citenamefont {Schuld}}]{bowles2024}%
  \BibitemOpen
  \bibfield  {author} {\bibinfo {author} {\bibfnamefont {J.}~\bibnamefont {Bowles}}, \bibinfo {author} {\bibfnamefont {S.}~\bibnamefont {Ahmed}},\ and\ \bibinfo {author} {\bibfnamefont {M.}~\bibnamefont {Schuld}},\ }\href {https://arxiv.org/abs/2403.07059} {\bibinfo {title} {Better than classical? the subtle art of benchmarking quantum machine learning models}} (\bibinfo {year} {2024}),\ \Eprint {https://arxiv.org/abs/2403.07059} {arXiv:2403.07059 [quant-ph]} \BibitemShut {NoStop}%
\bibitem [{\citenamefont {Wang}\ \emph {et~al.}(2021)\citenamefont {Wang}, \citenamefont {Fontana}, \citenamefont {Cerezo}, \citenamefont {Sharma}, \citenamefont {Sone}, \citenamefont {Cincio},\ and\ \citenamefont {Coles}}]{Wang2021}%
  \BibitemOpen
  \bibfield  {author} {\bibinfo {author} {\bibfnamefont {S.}~\bibnamefont {Wang}}, \bibinfo {author} {\bibfnamefont {E.}~\bibnamefont {Fontana}}, \bibinfo {author} {\bibfnamefont {M.}~\bibnamefont {Cerezo}}, \bibinfo {author} {\bibfnamefont {K.}~\bibnamefont {Sharma}}, \bibinfo {author} {\bibfnamefont {A.}~\bibnamefont {Sone}}, \bibinfo {author} {\bibfnamefont {L.}~\bibnamefont {Cincio}},\ and\ \bibinfo {author} {\bibfnamefont {P.~J.}\ \bibnamefont {Coles}},\ }\bibfield  {journal} {\bibinfo  {journal} {Nature Communications}\ }\textbf {\bibinfo {volume} {12}},\ \href {https://doi.org/10.1038/s41467-021-27045-6} {10.1038/s41467-021-27045-6} (\bibinfo {year} {2021})\BibitemShut {NoStop}%
\bibitem [{\citenamefont {Stilck~Fran\c{c}a}\ and\ \citenamefont {García-Patrón}(2021)}]{StilckFrana2021}%
  \BibitemOpen
  \bibfield  {author} {\bibinfo {author} {\bibfnamefont {D.}~\bibnamefont {Stilck~Fran\c{c}a}}\ and\ \bibinfo {author} {\bibfnamefont {R.}~\bibnamefont {García-Patrón}},\ }\href {https://doi.org/10.1038/s41567-021-01356-3} {\bibfield  {journal} {\bibinfo  {journal} {Nature Physics}\ }\textbf {\bibinfo {volume} {17}},\ \bibinfo {pages} {1221–1227} (\bibinfo {year} {2021})}\BibitemShut {NoStop}%
\bibitem [{\citenamefont {Takagi}\ \emph {et~al.}(2022)\citenamefont {Takagi}, \citenamefont {Endo}, \citenamefont {Minagawa},\ and\ \citenamefont {Gu}}]{Takagi2022}%
  \BibitemOpen
  \bibfield  {author} {\bibinfo {author} {\bibfnamefont {R.}~\bibnamefont {Takagi}}, \bibinfo {author} {\bibfnamefont {S.}~\bibnamefont {Endo}}, \bibinfo {author} {\bibfnamefont {S.}~\bibnamefont {Minagawa}},\ and\ \bibinfo {author} {\bibfnamefont {M.}~\bibnamefont {Gu}},\ }\bibfield  {journal} {\bibinfo  {journal} {npj Quantum Information}\ }\textbf {\bibinfo {volume} {8}},\ \href {https://doi.org/10.1038/s41534-022-00618-z} {10.1038/s41534-022-00618-z} (\bibinfo {year} {2022})\BibitemShut {NoStop}%
\bibitem [{\citenamefont {Schuster}\ \emph {et~al.}(2024)\citenamefont {Schuster}, \citenamefont {Yin}, \citenamefont {Gao},\ and\ \citenamefont {Yao}}]{schuster2024}%
  \BibitemOpen
  \bibfield  {author} {\bibinfo {author} {\bibfnamefont {T.}~\bibnamefont {Schuster}}, \bibinfo {author} {\bibfnamefont {C.}~\bibnamefont {Yin}}, \bibinfo {author} {\bibfnamefont {X.}~\bibnamefont {Gao}},\ and\ \bibinfo {author} {\bibfnamefont {N.~Y.}\ \bibnamefont {Yao}},\ }\href {https://arxiv.org/abs/2407.12768} {\bibinfo {title} {A polynomial-time classical algorithm for noisy quantum circuits}} (\bibinfo {year} {2024}),\ \Eprint {https://arxiv.org/abs/2407.12768} {arXiv:2407.12768 [quant-ph]} \BibitemShut {NoStop}%
\bibitem [{\citenamefont {Mele}\ \emph {et~al.}(2024)\citenamefont {Mele}, \citenamefont {Angrisani}, \citenamefont {Ghosh}, \citenamefont {Khatri}, \citenamefont {Eisert}, \citenamefont {França},\ and\ \citenamefont {Quek}}]{mele2024}%
  \BibitemOpen
  \bibfield  {author} {\bibinfo {author} {\bibfnamefont {A.~A.}\ \bibnamefont {Mele}}, \bibinfo {author} {\bibfnamefont {A.}~\bibnamefont {Angrisani}}, \bibinfo {author} {\bibfnamefont {S.}~\bibnamefont {Ghosh}}, \bibinfo {author} {\bibfnamefont {S.}~\bibnamefont {Khatri}}, \bibinfo {author} {\bibfnamefont {J.}~\bibnamefont {Eisert}}, \bibinfo {author} {\bibfnamefont {D.~S.}\ \bibnamefont {França}},\ and\ \bibinfo {author} {\bibfnamefont {Y.}~\bibnamefont {Quek}},\ }\href {https://arxiv.org/abs/2403.13927} {\bibinfo {title} {Noise-induced shallow circuits and absence of barren plateaus}} (\bibinfo {year} {2024}),\ \Eprint {https://arxiv.org/abs/2403.13927} {arXiv:2403.13927 [quant-ph]} \BibitemShut {NoStop}%
\bibitem [{\citenamefont {Chirolli}\ and\ \citenamefont {Burkard}(2008)}]{Chirolli2008}%
  \BibitemOpen
  \bibfield  {author} {\bibinfo {author} {\bibfnamefont {L.}~\bibnamefont {Chirolli}}\ and\ \bibinfo {author} {\bibfnamefont {G.}~\bibnamefont {Burkard}},\ }\href {https://doi.org/10.1080/00018730802218067} {\bibfield  {journal} {\bibinfo  {journal} {Advances in Physics}\ }\textbf {\bibinfo {volume} {57}},\ \bibinfo {pages} {225–285} (\bibinfo {year} {2008})}\BibitemShut {NoStop}%
\bibitem [{\citenamefont {Blume-Kohout}\ \emph {et~al.}(2017)\citenamefont {Blume-Kohout}, \citenamefont {Gamble}, \citenamefont {Nielsen}, \citenamefont {Rudinger}, \citenamefont {Mizrahi}, \citenamefont {Fortier},\ and\ \citenamefont {Maunz}}]{BlumeKohout2017}%
  \BibitemOpen
  \bibfield  {author} {\bibinfo {author} {\bibfnamefont {R.}~\bibnamefont {Blume-Kohout}}, \bibinfo {author} {\bibfnamefont {J.~K.}\ \bibnamefont {Gamble}}, \bibinfo {author} {\bibfnamefont {E.}~\bibnamefont {Nielsen}}, \bibinfo {author} {\bibfnamefont {K.}~\bibnamefont {Rudinger}}, \bibinfo {author} {\bibfnamefont {J.}~\bibnamefont {Mizrahi}}, \bibinfo {author} {\bibfnamefont {K.}~\bibnamefont {Fortier}},\ and\ \bibinfo {author} {\bibfnamefont {P.}~\bibnamefont {Maunz}},\ }\bibfield  {journal} {\bibinfo  {journal} {Nature Communications}\ }\textbf {\bibinfo {volume} {8}},\ \href {https://doi.org/10.1038/ncomms14485} {10.1038/ncomms14485} (\bibinfo {year} {2017})\BibitemShut {NoStop}%
\bibitem [{\citenamefont {Ben-Or}\ \emph {et~al.}(2013)\citenamefont {Ben-Or}, \citenamefont {Gottesman},\ and\ \citenamefont {Hassidim}}]{benor2013}%
  \BibitemOpen
  \bibfield  {author} {\bibinfo {author} {\bibfnamefont {M.}~\bibnamefont {Ben-Or}}, \bibinfo {author} {\bibfnamefont {D.}~\bibnamefont {Gottesman}},\ and\ \bibinfo {author} {\bibfnamefont {A.}~\bibnamefont {Hassidim}},\ }\href {https://arxiv.org/abs/1301.1995} {\bibinfo {title} {Quantum refrigerator}} (\bibinfo {year} {2013}),\ \Eprint {https://arxiv.org/abs/1301.1995} {arXiv:1301.1995 [quant-ph]} \BibitemShut {NoStop}%
\bibitem [{\citenamefont {Fontana}\ \emph {et~al.}(2023)\citenamefont {Fontana}, \citenamefont {Rudolph}, \citenamefont {Duncan}, \citenamefont {Rungger},\ and\ \citenamefont {Cîrstoiu}}]{fontana2023}%
  \BibitemOpen
  \bibfield  {author} {\bibinfo {author} {\bibfnamefont {E.}~\bibnamefont {Fontana}}, \bibinfo {author} {\bibfnamefont {M.~S.}\ \bibnamefont {Rudolph}}, \bibinfo {author} {\bibfnamefont {R.}~\bibnamefont {Duncan}}, \bibinfo {author} {\bibfnamefont {I.}~\bibnamefont {Rungger}},\ and\ \bibinfo {author} {\bibfnamefont {C.}~\bibnamefont {Cîrstoiu}},\ }\href {https://arxiv.org/abs/2306.05400} {\bibinfo {title} {Classical simulations of noisy variational quantum circuits}} (\bibinfo {year} {2023}),\ \Eprint {https://arxiv.org/abs/2306.05400} {arXiv:2306.05400 [quant-ph]} \BibitemShut {NoStop}%
\bibitem [{\citenamefont {Rudolph}\ \emph {et~al.}(2023)\citenamefont {Rudolph}, \citenamefont {Fontana}, \citenamefont {Holmes},\ and\ \citenamefont {Cincio}}]{rudolph2023}%
  \BibitemOpen
  \bibfield  {author} {\bibinfo {author} {\bibfnamefont {M.~S.}\ \bibnamefont {Rudolph}}, \bibinfo {author} {\bibfnamefont {E.}~\bibnamefont {Fontana}}, \bibinfo {author} {\bibfnamefont {Z.}~\bibnamefont {Holmes}},\ and\ \bibinfo {author} {\bibfnamefont {L.}~\bibnamefont {Cincio}},\ }\href {https://arxiv.org/abs/2308.09109} {\bibinfo {title} {Classical surrogate simulation of quantum systems with lowesa}} (\bibinfo {year} {2023}),\ \Eprint {https://arxiv.org/abs/2308.09109} {arXiv:2308.09109 [quant-ph]} \BibitemShut {NoStop}%
\bibitem [{\citenamefont {Begušić}\ \emph {et~al.}(2024)\citenamefont {Begušić}, \citenamefont {Gray},\ and\ \citenamefont {Chan}}]{begusic2024}%
  \BibitemOpen
  \bibfield  {author} {\bibinfo {author} {\bibfnamefont {T.}~\bibnamefont {Begušić}}, \bibinfo {author} {\bibfnamefont {J.}~\bibnamefont {Gray}},\ and\ \bibinfo {author} {\bibfnamefont {G.~K.-L.}\ \bibnamefont {Chan}},\ }\href {https://doi.org/10.1126/sciadv.adk4321} {\bibfield  {journal} {\bibinfo  {journal} {Science Advances}\ }\textbf {\bibinfo {volume} {10}},\ \bibinfo {pages} {eadk4321} (\bibinfo {year} {2024})},\ \Eprint {https://arxiv.org/abs/https://www.science.org/doi/pdf/10.1126/sciadv.adk4321} {https://www.science.org/doi/pdf/10.1126/sciadv.adk4321} \BibitemShut {NoStop}%
\bibitem [{\citenamefont {Angrisani}\ \emph {et~al.}(2024)\citenamefont {Angrisani}, \citenamefont {Schmidhuber}, \citenamefont {Rudolph}, \citenamefont {Cerezo}, \citenamefont {Holmes},\ and\ \citenamefont {Huang}}]{angrisani2024}%
  \BibitemOpen
  \bibfield  {author} {\bibinfo {author} {\bibfnamefont {A.}~\bibnamefont {Angrisani}}, \bibinfo {author} {\bibfnamefont {A.}~\bibnamefont {Schmidhuber}}, \bibinfo {author} {\bibfnamefont {M.~S.}\ \bibnamefont {Rudolph}}, \bibinfo {author} {\bibfnamefont {M.}~\bibnamefont {Cerezo}}, \bibinfo {author} {\bibfnamefont {Z.}~\bibnamefont {Holmes}},\ and\ \bibinfo {author} {\bibfnamefont {H.-Y.}\ \bibnamefont {Huang}},\ }\href {https://arxiv.org/abs/2409.01706} {\bibinfo {title} {Classically estimating observables of noiseless quantum circuits}} (\bibinfo {year} {2024}),\ \Eprint {https://arxiv.org/abs/2409.01706} {arXiv:2409.01706 [quant-ph]} \BibitemShut {NoStop}%
\bibitem [{\citenamefont {Shao}\ \emph {et~al.}(2024)\citenamefont {Shao}, \citenamefont {Wei}, \citenamefont {Cheng},\ and\ \citenamefont {Liu}}]{Shao2024}%
  \BibitemOpen
  \bibfield  {author} {\bibinfo {author} {\bibfnamefont {Y.}~\bibnamefont {Shao}}, \bibinfo {author} {\bibfnamefont {F.}~\bibnamefont {Wei}}, \bibinfo {author} {\bibfnamefont {S.}~\bibnamefont {Cheng}},\ and\ \bibinfo {author} {\bibfnamefont {Z.}~\bibnamefont {Liu}},\ }\href {https://doi.org/10.1103/PhysRevLett.133.120603} {\bibfield  {journal} {\bibinfo  {journal} {Phys. Rev. Lett.}\ }\textbf {\bibinfo {volume} {133}},\ \bibinfo {pages} {120603} (\bibinfo {year} {2024})}\BibitemShut {NoStop}%
\bibitem [{\citenamefont {Lerch}\ \emph {et~al.}(2024)\citenamefont {Lerch}, \citenamefont {Puig}, \citenamefont {Rudolph}, \citenamefont {Angrisani}, \citenamefont {Jones}, \citenamefont {Cerezo}, \citenamefont {Thanasilp},\ and\ \citenamefont {Holmes}}]{lerch2024}%
  \BibitemOpen
  \bibfield  {author} {\bibinfo {author} {\bibfnamefont {S.}~\bibnamefont {Lerch}}, \bibinfo {author} {\bibfnamefont {R.}~\bibnamefont {Puig}}, \bibinfo {author} {\bibfnamefont {M.~S.}\ \bibnamefont {Rudolph}}, \bibinfo {author} {\bibfnamefont {A.}~\bibnamefont {Angrisani}}, \bibinfo {author} {\bibfnamefont {T.}~\bibnamefont {Jones}}, \bibinfo {author} {\bibfnamefont {M.}~\bibnamefont {Cerezo}}, \bibinfo {author} {\bibfnamefont {S.}~\bibnamefont {Thanasilp}},\ and\ \bibinfo {author} {\bibfnamefont {Z.}~\bibnamefont {Holmes}},\ }\href {https://arxiv.org/abs/2411.19896} {\bibinfo {title} {Efficient quantum-enhanced classical simulation for patches of quantum landscapes}} (\bibinfo {year} {2024}),\ \Eprint {https://arxiv.org/abs/2411.19896} {arXiv:2411.19896 [quant-ph]} \BibitemShut {NoStop}%
\bibitem [{\citenamefont {Bermejo}\ \emph {et~al.}(2024)\citenamefont {Bermejo}, \citenamefont {Braccia}, \citenamefont {Rudolph}, \citenamefont {Holmes}, \citenamefont {Cincio},\ and\ \citenamefont {Cerezo}}]{bermejo2024}%
  \BibitemOpen
  \bibfield  {author} {\bibinfo {author} {\bibfnamefont {P.}~\bibnamefont {Bermejo}}, \bibinfo {author} {\bibfnamefont {P.}~\bibnamefont {Braccia}}, \bibinfo {author} {\bibfnamefont {M.~S.}\ \bibnamefont {Rudolph}}, \bibinfo {author} {\bibfnamefont {Z.}~\bibnamefont {Holmes}}, \bibinfo {author} {\bibfnamefont {L.}~\bibnamefont {Cincio}},\ and\ \bibinfo {author} {\bibfnamefont {M.}~\bibnamefont {Cerezo}},\ }\href {https://arxiv.org/abs/2408.12739} {\bibinfo {title} {Quantum convolutional neural networks are (effectively) classically simulable}} (\bibinfo {year} {2024}),\ \Eprint {https://arxiv.org/abs/2408.12739} {arXiv:2408.12739 [quant-ph]} \BibitemShut {NoStop}%
\bibitem [{\citenamefont {Rall}\ \emph {et~al.}(2019)\citenamefont {Rall}, \citenamefont {Liang}, \citenamefont {Cook},\ and\ \citenamefont {Kretschmer}}]{rall2019}%
  \BibitemOpen
  \bibfield  {author} {\bibinfo {author} {\bibfnamefont {P.}~\bibnamefont {Rall}}, \bibinfo {author} {\bibfnamefont {D.}~\bibnamefont {Liang}}, \bibinfo {author} {\bibfnamefont {J.}~\bibnamefont {Cook}},\ and\ \bibinfo {author} {\bibfnamefont {W.}~\bibnamefont {Kretschmer}},\ }\href {https://doi.org/10.1103/PhysRevA.99.062337} {\bibfield  {journal} {\bibinfo  {journal} {Phys. Rev. A}\ }\textbf {\bibinfo {volume} {99}},\ \bibinfo {pages} {062337} (\bibinfo {year} {2019})}\BibitemShut {NoStop}%
\bibitem [{\citenamefont {Aharonov}\ \emph {et~al.}(2023)\citenamefont {Aharonov}, \citenamefont {Gao}, \citenamefont {Landau}, \citenamefont {Liu},\ and\ \citenamefont {Vazirani}}]{aharonov2022}%
  \BibitemOpen
  \bibfield  {author} {\bibinfo {author} {\bibfnamefont {D.}~\bibnamefont {Aharonov}}, \bibinfo {author} {\bibfnamefont {X.}~\bibnamefont {Gao}}, \bibinfo {author} {\bibfnamefont {Z.}~\bibnamefont {Landau}}, \bibinfo {author} {\bibfnamefont {Y.}~\bibnamefont {Liu}},\ and\ \bibinfo {author} {\bibfnamefont {U.}~\bibnamefont {Vazirani}},\ }in\ \href {https://doi.org/10.1145/3564246.3585234} {\emph {\bibinfo {booktitle} {Proceedings of the 55th Annual ACM Symposium on Theory of Computing}}},\ \bibinfo {series and number} {STOC 2023}\ (\bibinfo  {publisher} {Association for Computing Machinery},\ \bibinfo {address} {New York, NY, USA},\ \bibinfo {year} {2023})\ p.\ \bibinfo {pages} {945–957}\BibitemShut {NoStop}%
\bibitem [{\citenamefont {González-García}\ \emph {et~al.}(2024)\citenamefont {González-García}, \citenamefont {Cirac},\ and\ \citenamefont {Trivedi}}]{gonzalezgarcía2024paulipathsimulationsnoisy}%
  \BibitemOpen
  \bibfield  {author} {\bibinfo {author} {\bibfnamefont {G.}~\bibnamefont {González-García}}, \bibinfo {author} {\bibfnamefont {J.~I.}\ \bibnamefont {Cirac}},\ and\ \bibinfo {author} {\bibfnamefont {R.}~\bibnamefont {Trivedi}},\ }\href {https://arxiv.org/abs/2407.16068} {\bibinfo {title} {Pauli path simulations of noisy quantum circuits beyond average case}} (\bibinfo {year} {2024}),\ \Eprint {https://arxiv.org/abs/2407.16068} {arXiv:2407.16068 [quant-ph]} \BibitemShut {NoStop}%
\bibitem [{\citenamefont {Tanggara}\ \emph {et~al.}(2024)\citenamefont {Tanggara}, \citenamefont {Gu},\ and\ \citenamefont {Bharti}}]{tanggara2024classically}%
  \BibitemOpen
  \bibfield  {author} {\bibinfo {author} {\bibfnamefont {A.}~\bibnamefont {Tanggara}}, \bibinfo {author} {\bibfnamefont {M.}~\bibnamefont {Gu}},\ and\ \bibinfo {author} {\bibfnamefont {K.}~\bibnamefont {Bharti}},\ }\href@noop {} {\bibfield  {journal} {\bibinfo  {journal} {arXiv preprint arXiv:2405.00789}\ } (\bibinfo {year} {2024})}\BibitemShut {NoStop}%
\bibitem [{\citenamefont {Angrisani}\ \emph {et~al.}(2025)\citenamefont {Angrisani}, \citenamefont {Mele}, \citenamefont {Rudolph}, \citenamefont {Cerezo},\ and\ \citenamefont {Holmes}}]{angrisani2025simulating}%
  \BibitemOpen
  \bibfield  {author} {\bibinfo {author} {\bibfnamefont {A.}~\bibnamefont {Angrisani}}, \bibinfo {author} {\bibfnamefont {A.~A.}\ \bibnamefont {Mele}}, \bibinfo {author} {\bibfnamefont {M.~S.}\ \bibnamefont {Rudolph}}, \bibinfo {author} {\bibfnamefont {M.}~\bibnamefont {Cerezo}},\ and\ \bibinfo {author} {\bibfnamefont {Z.}~\bibnamefont {Holmes}},\ }\href@noop {} {\bibinfo {title} {Simulating quantum circuits with arbitrary local noise using pauli propagation}} (\bibinfo {year} {2025})\BibitemShut {NoStop}%
\bibitem [{\citenamefont {Bravyi}\ and\ \citenamefont {Gosset}(2016)}]{Bravyi_2016}%
  \BibitemOpen
  \bibfield  {author} {\bibinfo {author} {\bibfnamefont {S.}~\bibnamefont {Bravyi}}\ and\ \bibinfo {author} {\bibfnamefont {D.}~\bibnamefont {Gosset}},\ }\bibfield  {journal} {\bibinfo  {journal} {Physical Review Letters}\ }\textbf {\bibinfo {volume} {116}},\ \href {https://doi.org/10.1103/physrevlett.116.250501} {10.1103/physrevlett.116.250501} (\bibinfo {year} {2016})\BibitemShut {NoStop}%
\bibitem [{\citenamefont {Bravyi}\ and\ \citenamefont {Kitaev}(2005)}]{Bravyi2005}%
  \BibitemOpen
  \bibfield  {author} {\bibinfo {author} {\bibfnamefont {S.}~\bibnamefont {Bravyi}}\ and\ \bibinfo {author} {\bibfnamefont {A.}~\bibnamefont {Kitaev}},\ }\href {https://doi.org/10.1103/PhysRevA.71.022316} {\bibfield  {journal} {\bibinfo  {journal} {Phys. Rev. A}\ }\textbf {\bibinfo {volume} {71}},\ \bibinfo {pages} {022316} (\bibinfo {year} {2005})}\BibitemShut {NoStop}%
\bibitem [{\citenamefont {Chow}\ \emph {et~al.}(2012)\citenamefont {Chow}, \citenamefont {Gambetta}, \citenamefont {C\'orcoles}, \citenamefont {Merkel}, \citenamefont {Smolin}, \citenamefont {Rigetti}, \citenamefont {Poletto}, \citenamefont {Keefe}, \citenamefont {Rothwell}, \citenamefont {Rozen}, \citenamefont {Ketchen},\ and\ \citenamefont {Steffen}}]{Chow2012}%
  \BibitemOpen
  \bibfield  {author} {\bibinfo {author} {\bibfnamefont {J.~M.}\ \bibnamefont {Chow}}, \bibinfo {author} {\bibfnamefont {J.~M.}\ \bibnamefont {Gambetta}}, \bibinfo {author} {\bibfnamefont {A.~D.}\ \bibnamefont {C\'orcoles}}, \bibinfo {author} {\bibfnamefont {S.~T.}\ \bibnamefont {Merkel}}, \bibinfo {author} {\bibfnamefont {J.~A.}\ \bibnamefont {Smolin}}, \bibinfo {author} {\bibfnamefont {C.}~\bibnamefont {Rigetti}}, \bibinfo {author} {\bibfnamefont {S.}~\bibnamefont {Poletto}}, \bibinfo {author} {\bibfnamefont {G.~A.}\ \bibnamefont {Keefe}}, \bibinfo {author} {\bibfnamefont {M.~B.}\ \bibnamefont {Rothwell}}, \bibinfo {author} {\bibfnamefont {J.~R.}\ \bibnamefont {Rozen}}, \bibinfo {author} {\bibfnamefont {M.~B.}\ \bibnamefont {Ketchen}},\ and\ \bibinfo {author} {\bibfnamefont {M.}~\bibnamefont {Steffen}},\ }\href {https://doi.org/10.1103/PhysRevLett.109.060501} {\bibfield  {journal} {\bibinfo  {journal} {Phys. Rev. Lett.}\ }\textbf {\bibinfo {volume} {109}},\ \bibinfo {pages} {060501} (\bibinfo {year}
  {2012})}\BibitemShut {NoStop}%
\bibitem [{\citenamefont {Aaronson}\ and\ \citenamefont {Gottesman}(2004)}]{Aaronson_2004}%
  \BibitemOpen
  \bibfield  {author} {\bibinfo {author} {\bibfnamefont {S.}~\bibnamefont {Aaronson}}\ and\ \bibinfo {author} {\bibfnamefont {D.}~\bibnamefont {Gottesman}},\ }\bibfield  {journal} {\bibinfo  {journal} {Physical Review A}\ }\textbf {\bibinfo {volume} {70}},\ \href {https://doi.org/10.1103/physreva.70.052328} {10.1103/physreva.70.052328} (\bibinfo {year} {2004})\BibitemShut {NoStop}%
\bibitem [{\citenamefont {Ruskai}\ \emph {et~al.}(2001)\citenamefont {Ruskai}, \citenamefont {Szarek},\ and\ \citenamefont {Werner}}]{ruskai2001}%
  \BibitemOpen
  \bibfield  {author} {\bibinfo {author} {\bibfnamefont {M.~B.}\ \bibnamefont {Ruskai}}, \bibinfo {author} {\bibfnamefont {S.}~\bibnamefont {Szarek}},\ and\ \bibinfo {author} {\bibfnamefont {E.}~\bibnamefont {Werner}},\ }\href {https://arxiv.org/abs/quant-ph/0101003} {\bibinfo {title} {An analysis of completely-positive trace-preserving maps on 2x2 matrices}} (\bibinfo {year} {2001}),\ \Eprint {https://arxiv.org/abs/quant-ph/0101003} {arXiv:quant-ph/0101003 [quant-ph]} \BibitemShut {NoStop}%
\end{thebibliography}%

\appendix
\pagestyle{fancy}
\fancyhf{} 
\fancyfoot[C]{\thepage} 
\renewcommand{\headrulewidth}{0pt} 
\renewcommand{\footrulewidth}{0pt}
\widetext

\newpage
\section{Average Pauli backpropagation algorithm}\label{sec:Algorithm}
In this section we present the pseudocode for our deterministic Pauli backpropagation algorithm in Alg.~\ref{alg:Algorithm} and empirically evaluate how fast it converges.
\begin{algorithm}
\caption{[LOWESA-AD] Simulation of Pauli string observables under non-unital noise}\label{alg:Algorithm}
\textbf{Input:} Quantum circuit made of $m$ alternating layers of Clifford unitaries and $R_z$ rotations, affected by non-unital noise, and defined by process modes $\{\mathcal{D}_\omega\}$; measurement Pauli operator $P$; cut-off parameter $\ell$.\\
\textbf{Output:} $\Tilde{f}(\theta)$, an approximation of $f(\theta)$ and $r\leq\min_{\omega /\#\omega\geq\ell} |h(\omega)|$.

\textbf{Procedure} \emph{LOWESA-AD}: Given a quantum circuit with rotation parameters $\theta$ and Pauli observable $P$, outputs an approximation of the expectation value of $P$, by first creating a surrogate of the energy landscape using \emph{Low-weight coefficients} and then adjusting the angles.

\begin{algorithmic}[1]
\State $\Tilde{f}(\theta) \leftarrow 0$
\State Run \emph{Low-weight coefficients} to calculate $d_\omega = \langle\bra{0}\mathbf{D_\omega}^T\ket{P}\rangle$ for all $\omega$ such that $\#\omega \leq \ell$.
\State \textbf{for all} $\omega$ such that $\#\omega \leq \ell$ with non-zero $d_\omega$ \textbf{do}
\State \hspace{0.5cm} $\Tilde{f}(\theta) \leftarrow \Tilde{f}(\theta) + d_\omega \Phi_\omega (\theta)$
\State \textbf{end for}
\State Return $\Tilde{f}(\theta)$ and $r=\min_{\omega}r_\omega\leq\min_{\omega /\#\omega\geq \ell} |h(\omega)|$
\end{algorithmic}
\textbf{end Procedure}\\

\textbf{Procedure} \emph{Low-weight coefficients}: Given a quantum circuit and Pauli observable $P$, outputs the backpropagated tree coefficients $d_\omega$ for all $\omega$ such that $\#\omega \leq \ell$ in time $\mathcal{O}(n^2m2^\ell)$. The tree is constructed iteratively as follow: at each layer of the quantum circuit, the tree $T_i$ is extended at each branch $\omega$ based on the Clifford layer $C_{m-i+1}$ and the Pauli operator on the rotation qubit $q_{m-i+1}$. Note that the initial tree $T_0$ is made of a single node corresponding to the Pauli observable $P$. The branches of the tree that have split more than $\ell$ times are discarded. 
\begin{algorithmic}[1]
\State \textbf{for} $i = 1$ \textbf{to} $m$ \textbf{do}
\State \hspace{0.5cm} \textbf{for} $\omega$ \textbf{in} $T_{i-1}$ \textbf{do}
\State \hspace{1cm} $P_\omega \leftarrow C_{m-i+1}^\dagger P_\omega C_{m-i+1}$, up to a phase $\phi_{\omega_i} \in \{ \pm1 \}$ which is stored.
\State \hspace{1cm} \textbf{if} $P_\omega^{q_{m-i+1}} = I$
\State \hspace{1.5cm} Add branch $\omega$ in $T_{i}$ with $\omega_i = 0$
\State \hspace{1.5cm} $d_{\omega_i}, P_\omega \leftarrow \mathcal{D}_{\omega_i}^\dagger (P_\omega)$
\State \hspace{1cm} \textbf{else if} $P_\omega^{q_{m-i+1}} = Z$ and $\#\omega < \ell$ \textbf{then}
\State \hspace{1.5cm} Split into two branches $\omega$ and $\omega'$ in $T_{i}$ with $\omega_i= 0_I$ and $\omega'_i = 0_Z$
\State \hspace{1.5cm} $d_{\omega_i}, P_\omega \leftarrow \mathcal{D}_{\omega_i}^\dagger (P_\omega)$
\State \hspace{1.5cm} $d_{\omega'_i}, P_{\omega'} \leftarrow \mathcal{D}_{\omega'_i}^\dagger (P_\omega)$
\State \hspace{1.5cm} $\#\omega,\#\omega' \leftarrow +1$
\State \hspace{1.cm} \textbf{else if} $P_\omega^{q_{m-i+1}} \in \{X,Y\}$ and $\#\omega < \ell$ \textbf{then}
\State \hspace{1.5cm} Split into two branches $\omega$ and $\omega'$ in $T_{i}$ with $\omega_i= +1$ and $\omega'_i = -1$
\State \hspace{1.5cm} $d_{\omega_i}, P_\omega \leftarrow \mathcal{D}_{\omega_i}^\dagger (P_\omega)$
\State \hspace{1.5cm} $d_{\omega'_i}, P_{\omega'} \leftarrow \mathcal{D}_{\omega'_i}^\dagger (P_\omega)$
\State \hspace{1.5cm} $\#\omega,\#\omega' \leftarrow +1$
\State \hspace{1cm} \textbf{else if} $P_\omega^{q_{m-i+1}} \in \{X,Y, Z\}$ and $\#\omega =\ell$ \textbf{then}
\State \hspace{1.5cm} Discard path $\omega$
\State \hspace{1.5cm} $r_\omega=|h(\omega)|$
\State \hspace{1cm} \textbf{end if}
\State \hspace{0.5cm} \textbf{end for}
\State \textbf{end for}
\State \textbf{for} $\omega$ \textbf{in} $T_{m}$ \textbf{do}
\State \hspace{0.5cm} $P_\omega \leftarrow C_0^\dagger P_\omega C_0$ and phase $\phi_{\omega_0} \in \{ \pm1 \}$
\State \textbf{end for}
\State Return $d_\omega =(\prod_{i=1}^m d_{\omega_i}\phi_{\omega_i})\phi_{\omega_0}\langle\braket{0|P_\omega}\rangle$ and $r_\omega$ for all $\omega$ in $T_{m}$
\end{algorithmic}
\textbf{end Procedure}
\end{algorithm}

As described in Theorem \ref{th:mainTheorem}, it is sufficient to have access to the quantity $\min_{\omega /\#\omega\geq \ell} |h(\omega)|$ to get a certificate on the $L^2$-error
made by our approximation. This quantity, which Algorithm \ref{alg:Algorithm} computed a lower bound to, is expected to scale linearly with $\ell$ in most cases. We present numerical evidence of such scaling by implementing our Algorithm on common PQC instances, namely QAOA for random 3-regular graphs. For each of these graph instances and associated QAOA quantum circuit, we backpropagate the observable $Z_iZ_j$, keeping track in the resulting tree of the quantity $r\leq\min_{\omega /\#\omega\geq \ell} |h(\omega)|$ for different values of $\ell$. The results are plotted in Figure $\ref{fig:expected_r}$.

\begin{figure}[h]
    \centering
    \includegraphics[scale=0.6]{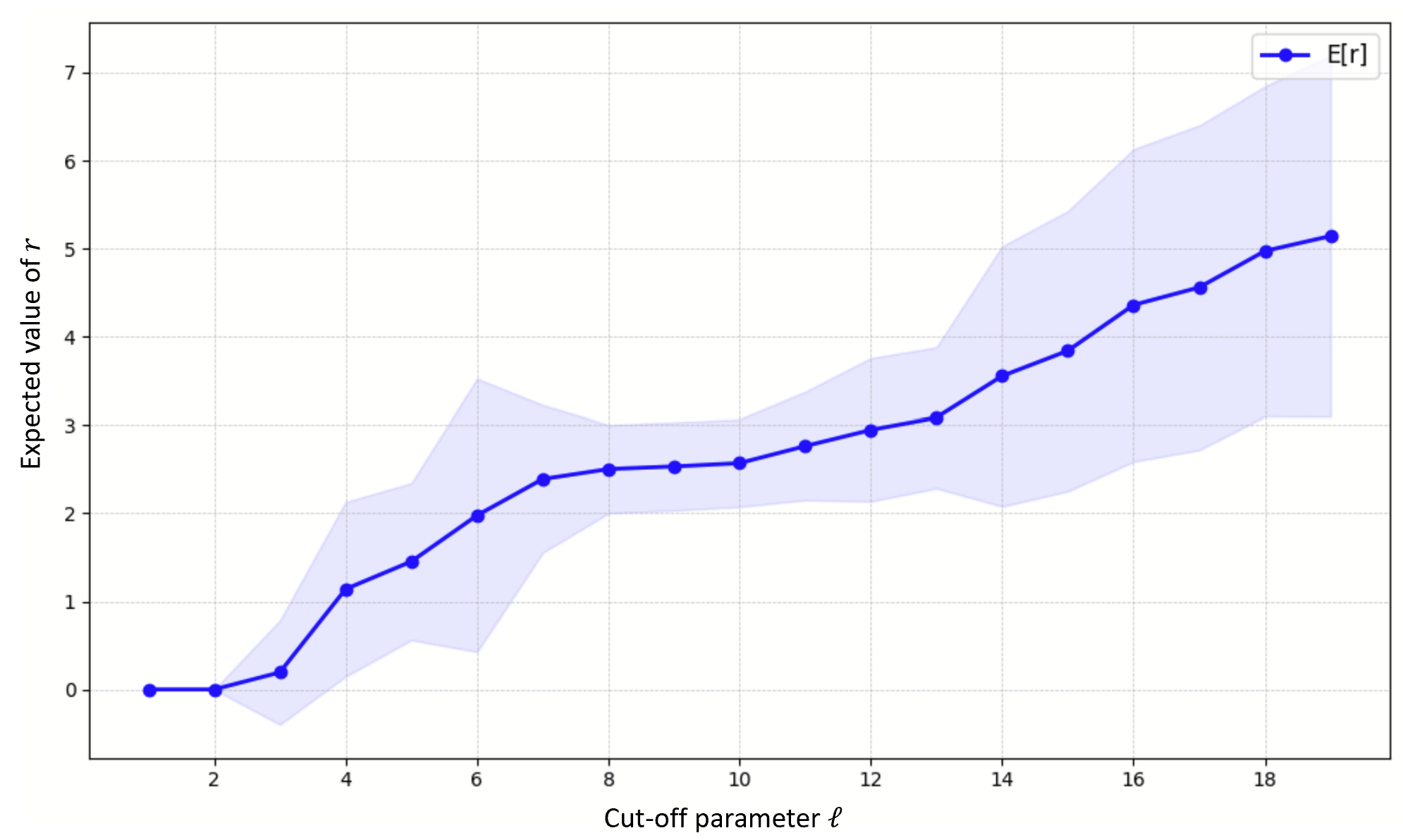}
    \caption{Lower bound $r$ on the expected number of splits through processes $\mathbf{D_{\pm1}}$ for a given cut-off parameter $\ell$. The quantum circuits considered are QAOA circuits on 3-regular graphs, with Pauli observables $Z_iZ_j$.}
    \label{fig:expected_r}
\end{figure}

\section{Proof of convergence}\label{sec:ProofConvergence}

Recall that the noisy circuits considered are made of alternating layers of Clifford unitary and single qubit $R_z$ rotations, such that,
\begin{equation}
    \mathcal{U_{\theta}}=\bigg(\bigcirc_{i=1}^m\mathcal{C}_i\circ \mathcal{R}^{(q_i)}_z(\theta_i)\circ\mathcal{N_{AD}}\bigg)\circ\mathcal{C}_0
\end{equation}

The noisy rotations can be decomposed in the PTM using the structure of the noise,

\begin{equation}
 \mathbf{R_z} \cdot \mathbf{N_{AD}}= \mathbf{D_0} + \mathbf{D_{0_Z}} + \mathbf{D_{0_I}} + \mathbf{D_{1}} \cos{\theta} + \mathbf{D_{-1}} \sin{\theta}
\end{equation}

where we have defined the quantum processes $\mathbf{D_0}=\ketbra{I}$, $\mathbf{D_{0_Z}}=(1-\gamma)\ketbra{Z}$, $\mathbf{D_{0_I}}=\gamma\ket{I}\bra{Z}$, $\mathbf{D_1}=\sqrt{1-\gamma}(\ketbra{X}+\ketbra{Y})$ and $\mathbf{D_{-1}}=\sqrt{1-\gamma}(\ket{Y}\bra{X}-\ket{X}\bra{Y})$, with $\ket{I} = [1 \ 0 \ 0 \ 0]^T, \ket{X}  = [0 \ 1 \ 0 \ 0]^T \ldots$ 
Our quantum circuit can then be further decomposed by summing all possible processes and keeping track of the rotation angles with the trigonometric function $\Phi_\omega(\theta)=\prod_{i=1}^m \phi_{\omega_i}(\theta_i)$, with $\phi_0(\theta_i)=\phi_{0_Z}(\theta_i)=\phi_{0_I}(\theta_i)=1$, $\phi_{1}(\theta_i)=\cos{\theta_i}$, and $\phi_{-1}(\theta_i)=\sin{\theta_i}$.

\begin{equation}
    \mathcal{U}_{\theta}=\sum_{\omega \in \{0,0_z,0_I,1,-1\}^m} \Phi_\omega(\theta)\mathcal{D}_\omega
\end{equation}

Each process $\mathcal{D}_\omega$ is defined in the Schr\"odinger picture by $\mathcal{D}_\omega = (\bigcirc_{i=1}\mathcal{C}_i \circ \mathcal{D}_{\omega_i}) \circ \mathcal{C}_0$. When backpropagating a Pauli string $P$, the rotation acts only on a single qubit $q_i$. Depending on the Pauli operator at qubit $q_i$, at most two processes will be valid: $\mathbf{D_0}$ for $I$, $\mathbf{D_{0_Z}}$ and $\mathbf{D_{0_I}}$ for $Z$, and $\mathbf{D_{\pm 1}}$ for $X$ and $Y$. This suggests the structure of a rooted tree $T = (V, E)$. The set of nodes $V$ represents the backpropagated Pauli strings at each layer, and the edges represent the valid processes $\mathbf{D_{\omega_i}}$ for the given Paulis. Our expectation value can then be rewritten as:

\begin{equation}
f(\theta) = \tr(\mathcal{U_{\theta}}[\ketbra{0}]P) = \sum_{\omega \in T} d_\omega \Phi_\omega(\theta)
\end{equation}

where the Fourier coefficients for each branch of the tree are defined by $d_\omega = \tr(\mathcal{D_{\omega}}[\ketbra{0}]P) = \langle\langle P|\mathbf{D_\omega}|0\rangle\rangle$. By backpropagating the Pauli string in this manner, we retrieve its expectation value, though at the cost of exponential computation time. The simulation requires updating the Clifford unitary for all paths of the tree at each layer. As there are $m$ layers and at most $2^m$ branches, the runtime of this exact simulation is at most $\mathcal{O}(n^2m2^m)$. Furthermore, as the Pauli observable is backpropagated through the quantum circuit, noise accumulates on each of the branches. Denote  $Q(\omega)$ the noise damping accumulated throughout the path $\omega$, with $Q(\omega)=\prod_i^m Q(\omega_i)$ where $Q(+1)=Q(-1)=\sqrt{1-\gamma}$, $Q(0)=1$, $Q(0_I)=\gamma$ and $Q(0_Z)=1-\gamma$. For some $d_\omega^0\in\{-1,0,1\}$, we can rewrite $d_\omega=Q(\omega)d_\omega^0$, such that,

\begin{equation}
f(\theta) = \tr(\mathcal{U_{\theta}}[\ketbra{0}]P) = \sum_{\omega \in T} Q(\omega)d_\omega^0 \Phi_\omega(\theta)
\end{equation}

Each time a branch in the tree splits, the resulting Paulis are damped by the noise. Once sufficient damping has occurred, the contribution of this branch to the expectation value can be neglected. We show that keeping only paths in the tree that have split fewer than a constant number of times $\ell$ yields a good approximation. More precisely, let $\#\omega$ denote the number of times a given path $\omega$ has split. We approximate the function $f$ by the quantity:

\begin{equation}
    \Tilde{f}(\theta) = \sum_{\substack{\omega \in T \\ \#\omega \leq \ell}} d_\omega \Phi_\omega(\theta)=\sum_{\substack{\omega \in T \\ \#\omega \leq \ell}} Q(\omega)d_\omega^0 \Phi_\omega(\theta)
\end{equation}

The metric used to judge the quality of our approximation error will be the $L^2$-norm between $f$ and $\Tilde{f}$ over the range of angles $\Theta=[0,2\pi]^m$.

\begin{equation}
    \begin{aligned}
        \Delta^2(\Tilde{f}, f) &= \frac{1}{|\Theta|}\int_\Theta |\Tilde{f}(\theta) - f(\theta)|^2 d\theta \\
        &= \frac{1}{|\Theta|}\int_{\Theta} | \sum_{\substack{\omega \in T \\ \#\omega > \ell}} \Phi_\omega(\theta) d_\omega |^2 d\theta
    \end{aligned}
\end{equation}

This error metric is convenient for several reasons. First, note that there is no hope of having a polynomial-time simulation algorithm under non-unital noise for all circuits, as celebrated results \cite{benor2013} have shown that it is possible to compute under such noise for an exponential amount of time and perform error correction. The $L^2$-norm error gives us convergence on average over the rotation angles, which does not contradict previous results since, for a given set of angles, the simulation might not converge. Furthermore, this simulation method creates a surrogate of the parametrized expectation value, which can then be used by plugging in different angles. The $L^2$-norm error therefore tells us how good our approximation is over the entire ``landscape''. Finally, when computing this metric, most of the paths will be orthogonal to each other, resulting in only a few terms remaining in the error which can then be analytically bounded.

\begin{lemma}\label{eq:lemmaProof}
    Denote $h$ a function that transforms a vector $\omega$ to a vector $h(\omega)$ where $0$, $0_Z$ and $0_I$ are assigned the same value $0$, the rest of the components being unchanged. Denote $\#\omega$ the number of times the path $\omega$ has split and $|h(\omega)|$ the number of non-zero elements of $h(\omega)$. The error made by the algorithm can be expressed as

    \begin{equation}
    \begin{aligned}
        \Delta^2(\Tilde{f},f) &= \sum_{\substack{(\omega,\omega')\in T^2\\
        h(\omega)=h(\omega') \\ \#\omega>\ell, \#\omega'>\ell}}Q(\omega)Q(\omega')2^{-|h(\omega)|}d_\omega^0 \Bar{d_{\omega'}^0}\\
    \end{aligned}
    \end{equation}
\end{lemma}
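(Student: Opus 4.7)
The plan is to expand $|\tilde f(\theta)-f(\theta)|^2$ as a double sum over pairs of branches and then evaluate the integral against the uniform measure on $\Theta$ using orthogonality of the one-qubit trigonometric monomials $\{1,\cos\theta,\sin\theta\}$.

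First, since $\tilde f(\theta)-f(\theta)=-\sum_{\omega\in T,\ \#\omega>\ell} d_\omega \Phi_\omega(\theta)$, and all these quantities are real, squaring and expanding yields
\begin{equation*}
|\tilde f(\theta)-f(\theta)|^2 \;=\; \sum_{\substack{(\omega,\omega')\in T^2\\ \#\omega,\,\#\omega'>\ell}} d_\omega\,\overline{d_{\omega'}}\, \Phi_\omega(\theta)\,\Phi_{\omega'}(\theta).
\end{equation*}
This is a finite sum, so I can swap it with the integral over $\Theta$. The problem reduces to computing, for each pair $(\omega,\omega')$,
\begin{equation*}
\frac{1}{|\Theta|}\int_\Theta \Phi_\omega(\theta)\,\Phi_{\omega'}(\theta)\,d\theta \;=\; \prod_{i=1}^m \frac{1}{2\pi}\int_0^{2\pi} \phi_{\omega_i}(\theta_i)\,\phi_{\omega'_i}(\theta_i)\,d\theta_i,
\end{equation*}
because $\Theta$ is a product domain and $\Phi_\omega$ is a tensor product of single-angle factors.

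Second, I invoke the orthogonality of $\{1,\cos\theta,\sin\theta\}$ under the uniform measure on $[0,2\pi]$. By definition, $\phi_{\omega_i}$ equals $1$ when $\omega_i\in\{0,0_Z,0_I\}$, equals $\cos\theta_i$ when $\omega_i=+1$, and equals $\sin\theta_i$ when $\omega_i=-1$. These three functions are pairwise orthogonal, so the one-dimensional integral vanishes unless $\phi_{\omega_i}$ and $\phi_{\omega'_i}$ coincide as functions, which is precisely the condition $h(\omega_i)=h(\omega'_i)$. When nonzero, the integral equals $1$ if $h(\omega_i)=0$ and $\tfrac12$ if $h(\omega_i)=\pm 1$ (using $\int_0^{2\pi}\cos^2=\int_0^{2\pi}\sin^2=\pi$). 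Multiplying over the $m$ coordinates enforces the global condition $h(\omega)=h(\omega')$ and produces the factor $2^{-|h(\omega)|}$, where $|h(\omega)|$ counts the coordinates with $\omega_i=\pm 1$.

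Third, I substitute the decomposition $d_\omega=Q(\omega)\,d_\omega^0$ (and likewise for $\omega'$) to pull the accumulated damping outside the inner product, producing the stated identity. The only step requiring genuine care is the identification of the right orthogonality class: the three labels $0,0_Z,0_I$ all map to the constant monomial $\phi=1$, so integrating over $\theta$ cannot distinguish them, which is exactly why the surviving constraint is $h(\omega)=h(\omega')$ rather than the stricter $\omega=\omega'$. This is what allows cross terms between structurally different branches (those that split through $\mathbf{D_{0_Z}}$ versus $\mathbf{D_{0_I}}$) to contribute, and is the feature that distinguishes the non-unital analysis from the depolarizing case where only the diagonal $\omega=\omega'$ survives. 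Once this is recognized, the remainder is a routine orthogonality computation.
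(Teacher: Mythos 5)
Your proof is correct and follows essentially the same route as the paper: expand the square into a double sum over pairs of branches, use orthogonality of $\{1,\cos\theta,\sin\theta\}$ coordinate-wise to obtain $\frac{1}{|\Theta|}\int_\Theta \Phi_\omega\Phi_{\omega'}\,d\theta = 2^{-|h(\omega)|}\delta_{h(\omega)=h(\omega')}$, and then factor $d_\omega = Q(\omega)d_\omega^0$. Your version simply spells out the one-dimensional orthogonality computation that the paper invokes as ``known properties of integrals of trigonometric products,'' and your closing observation about why the surviving constraint is $h(\omega)=h(\omega')$ rather than $\omega=\omega'$ matches the paper's own discussion.
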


\begin{proof}
    Using known properties on the integral of product of trigonometric functions, one can compute the useful quantity: 
    \begin{equation}
        \frac{1}{|\Theta|}\int_\Theta \Phi_\omega(\theta)\Phi_{\omega'}(\theta)d\theta=\frac{1}{|\Theta|}\int_\Theta \prod_{i=1}^m \phi_{\omega_i}(\theta_i)\prod_{i=1}^m \phi_{\omega'_i}(\theta_i)d\theta=2^{-|h(\omega)|}\delta_{h(\omega)=h(\omega')}
    \end{equation}

    The approximation made by our algorithm can therefore be rewritten as:

    \begin{equation}
    \begin{aligned} \Delta^2(\Tilde{f},f)=\frac{1}{|\Theta|}\int_{\Theta}\big |\sum_{\substack{\omega \in T \\\#\omega>\ell}}\Phi_\omega(\theta)d_\omega \big |^2d\theta=\sum_{\substack{(\omega,\omega')\in T^2\\h(\omega)=h(\omega') \\ \#\omega>\ell, \#\omega'>\ell}}2^{-|h(\omega)|}d_\omega \Bar{d_{\omega'}}
    \end{aligned}
    \end{equation}

Finally, factoring out the noise $Q(\omega)$ and $Q(\omega')$ from the paths $\omega$ and $\omega'$ gives us the aforementioned equality.

\end{proof}

By definition of $d_\omega^0$, it is clear that for all $\omega$, $|d_\omega^0|\leq1$. This allows us to derive a simple bound on the error made:

\begin{equation}
    \Delta^2(\Tilde{f},f)\leq \sum_{\substack{(\omega,\omega')\in T^2\\h(\omega)=h(\omega') \\ \#\omega>\ell, \#\omega'>\ell}}Q(\omega)Q(\omega')2^{-|h(\omega)|}
\end{equation}

Our approximation error depends on the sum of pairs of paths $(\omega,\omega')$ of weights greater than $\ell$ satisfying some constraints. To quantify the contribution of each pair of paths, we introduce the quantity $v(\omega,\omega')$, that counts the number of positions where either $\omega$ or $\omega'$ or both have split, see Figure \ref{fig::backprog_v}. The error can clearly be bounded by:

\begin{equation}
    \Delta^2(\Tilde{f},f)\leq \sum_{\substack{(\omega,\omega')\in T^2\\h(\omega)=h(\omega') \\ v(\omega,\omega')>\ell}}Q(\omega)Q(\omega')2^{-|h(\omega)|}
\end{equation}

\begin{figure}
    \centering
    \begin{subfigure}{0.395\textwidth}
        \includegraphics[scale=0.37]{Whole_paths.png}
        \caption{}
    \end{subfigure}
    \begin{subfigure}{0.595\textwidth}
        \includegraphics[scale=0.68]{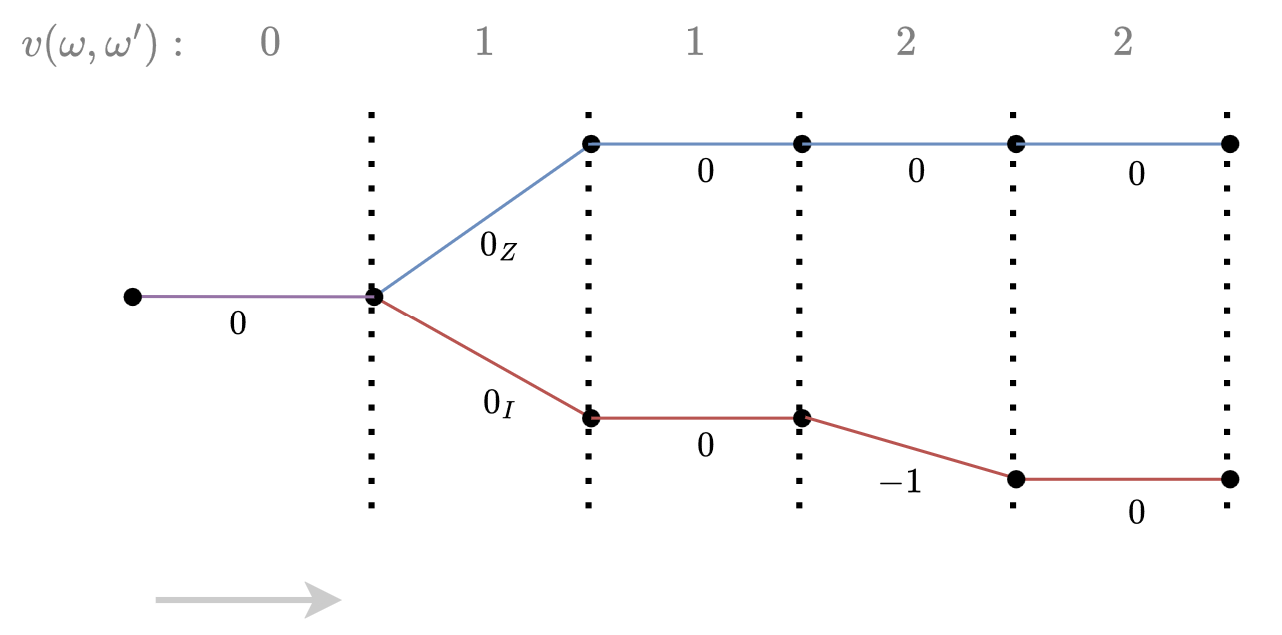}
        \caption{}.
    \end{subfigure}
    \caption{(a) Tree corresponding to the backpropagation of a single observable through the circuit. (b) Example of a pair of paths $(\omega,\omega')$ such that $v(\omega,\omega')=2$.}
    \label{fig::backprog_v}
\end{figure}

We show that splits that occur through processes $\mathbf{D_{\pm1}}$ dampen the approximation error by $1-\gamma$ and splits that occur through the pair of processes $\mathbf{D_{0_Z}}/\mathbf{D_{0_I}}$ leave the approximation error unchanged. Since random Clifford maps non-identity Paulis at random, adding a single-qubit random Clifford gate before the rotation allows us to derive the following bound:

\begin{prop}
Consider quantum circuits as defined previously, i.e. made of alternating layers of Clifford unitaries and 1-qubit rotations followed by a single random Clifford gate. The approximation error made by Algorithm \ref{alg:Algorithm} with cut-off $\ell$ is bounded by:
    \begin{equation}\label{eq:randomBound}
        \mathbb{E}[\Delta^2(\Tilde{f},f)]\leq [1-2\gamma/3]^{\ell+1},
    \end{equation}
    where the expectation is over the choice of the single-qubit random Clifford gates.
\end{prop}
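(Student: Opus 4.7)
The starting point is the deterministic bound from Lemma~\ref{eq:lemmaProof},
\begin{equation*}
\Delta^2(\Tilde{f}, f) \leq \sum_{\substack{(\omega,\omega')\in T^2\\ h(\omega)=h(\omega'),\, v(\omega,\omega')>\ell}} Q(\omega)Q(\omega')\,2^{-|h(\omega)|},
\end{equation*}
together with the per-layer bookkeeping already done in the proof of Theorem~\ref{th:mainTheorem}: after aggregating over the local pairs of branches, a split at a given layer of type $\mathbf{D_{\pm 1}}$ contributes a factor $(1-\gamma)$ to $Q(\omega)Q(\omega')\,2^{-|h(\omega)|}$, whereas a split of type $\mathbf{D_{0_Z}}/\mathbf{D_{0_I}}$ contributes a factor $1$. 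The plan is to average these contributions over the extra uniformly random single-qubit Cliffords inserted before each rotation.

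The key probabilistic observation is that a uniformly random single-qubit Clifford sends any non-identity Pauli to $\{X,Y,Z\}$ uniformly, up to a sign that drops out of $Q(\omega)Q(\omega')$. Hence, at every layer at which the backpropagated Pauli on qubit $q_i$ is non-identity, the inserted random Clifford makes the split of type $\mathbf{D_{\pm 1}}$ with probability $2/3$ and of type $\mathbf{D_{0_Z}}/\mathbf{D_{0_I}}$ with probability $1/3$, independently across layers. Averaging yields an expected per-split factor of $\tfrac{2}{3}(1-\gamma) + \tfrac{1}{3}\cdot 1 = 1-\tfrac{2\gamma}{3}$. I would then iterate the Heisenberg-picture backpropagation from layer $m$ down to $1$, taking the conditional expectation over the random Clifford at each layer via the tower property; layer-wise independence makes the overall expectation factorize, and since every pair surviving the truncation satisfies $v(\omega,\omega')>\ell$, the accumulated bound on each term is at most $(1-2\gamma/3)^{\ell+1}$.

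The main technical subtlety is that the set of pairs contributing to the sum is itself random, since which Paulis land on each rotation qubit depends on all the prior random Cliffords, and one must be careful to commute the expectation through the sum. I would address this by rewriting the bound as a sum over all formal label sequences $\omega,\omega' \in \{0,0_Z,0_I,\pm 1\}^m$ with indicator functions asserting that the pair is actually realized in the random tree, and then pushing the expectation through the sum layer by layer using the independence of the per-layer Cliffords. Once this manipulation is in place, the per-split probabilistic bound plugs directly into the deterministic case analysis of Theorem~\ref{th:mainTheorem}, and the final summation reduces to the same telescoping argument already used to establish Equation~\ref{eq:certificateBound}.
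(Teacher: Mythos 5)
Your proposal is correct and follows essentially the same route as the paper: the paper formalizes your layer-by-layer conditional-expectation argument as an induction on the number of remaining splits (via quantities $\alpha_j$ attached to path prefixes), uses the same $2/3$ versus $1/3$ probabilities for the random single-qubit Clifford, and the same per-split factors $(1-\gamma)$ and $1$ combining to the expected damping $1-2\gamma/3$, including the case analysis for cross-terms with $\omega\neq\omega'$ that you defer to. The technical subtlety you flag (the contributing pairs form a random set) is exactly what the paper's recursive definition of $\alpha_j$ over partial words handles.
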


\begin{proof}

For $i \in \{0, \dots, m\}$ and $\omega_{1,\ldots,i}$ and $\omega'_{1,\ldots,i}$ such that $h(\omega_{1,\ldots,i}) = h(\omega'_{1,\ldots,i})$, we define
    \begin{align*}
        \alpha_{j}(\omega_{1,\ldots,i},\omega'_{1,\ldots,i}) = \sum_{\substack{\omega_{i+1,\ldots,m},\omega'_{i+1,\ldots,m} \\ h(\omega) = h(\omega') \\
        v(\omega_{i+1,\ldots,m},\omega'_{i+1,\ldots,m}) \geq j}} Q(\omega_{i+1,\ldots,m}) Q(\omega_{i+1,\ldots,m}') 2^{-|h(\omega_{i+1,\ldots,m})|}
    \end{align*}
    Note that the quantity we wish to bound is $\alpha_{\ell+1}$ with the empty word ($i=0$). We show by induction on $j$ that for any $i$ and $\omega_{1,\ldots,i},\omega'_{1,\ldots,i}$, we have $\alpha_{j}(\omega_{1,\ldots,i},\omega'_{1,\ldots,i}) \leq [1 - 2\gamma/3]^{j}$. 
    
    The initialisation $j=0$ is straightforward as the expectation value of the Pauli string is bounded by 1. Assume now that the property is satisfied for $j$ and notice that both $Q$ and $2^{-|h|}$ can be expressed as products. We introduce $i'$ which corresponds to the index of the next split after $i$. Note that $\omega_{1,\ldots,i'-1}$ is the (unique) path extending $\omega_{1,\ldots,i}$ up to $i'-1$ similarly for $\omega'$. In addition, we have $Q(\omega_{i+1,\ldots ,i'-1})=Q(\omega_{i+1,\ldots ,i'-1})=1$ and $2^{-|h(\omega_{i+1,\ldots ,i'-1})|}=1$ which allows us to write,
    
     \begin{align*}
         &\alpha_{j+1}(\omega_{1,\ldots,i},\omega'_{1,\ldots,i}) 
     = \sum_{\substack{\omega_{i+1,\ldots,m},\omega'_{i+1,\ldots,m} \\ h(\omega) = h(\omega') \\
         v(\omega_{i+1,\ldots,m},\omega'_{i+1,\ldots,m}) \geq j+1}} Q(\omega_{i+1,\ldots,m}) Q(\omega'_{i+1,\ldots,m}) 2^{-|h(\omega_{i+1,\ldots,m})|} \\
    &= \sum_{\substack{\omega_{i+1,\ldots,i'},\omega'_{i+1,\ldots,i'} \\ h(\omega) = h(\omega')}} Q(\omega_{i+1,\ldots,i'}) Q(\omega'_{i+1,\ldots,i'}) 2^{-|h(\omega_{i+1,\ldots,i'})|}\sum_{\substack{\omega_{i'+1,\ldots,m},\omega'_{i'+1,\ldots,m} \\ h(\omega) = h(\omega')\\ 
     v(\omega_{i'+1,\ldots,m},\omega'_{i'+1,\ldots,m}) \geq j}} Q(\omega_{i'+1,\ldots,m}) Q(\omega'_{i'+1,\ldots,m}) 2^{-|h(\omega_{i'+1,\ldots,m})|}\\
     &= \sum_{\substack{\omega_{i+1,\ldots,i'},\omega'_{i+1,\ldots,i'} \\ h(\omega) = h(\omega')}} Q(\omega_{i+1,\ldots,i'}) Q(\omega'_{i+1,\ldots,i'}) 2^{-|h(\omega_{i+1,\ldots,i'})|}        \alpha_{j}(\omega_{1,\ldots,i'},\omega'_{1,\ldots,i'})
     \end{align*}

Using the recursive hypothesis on $\alpha_{j}(\omega_{1,\ldots,i'},\omega'_{1,\ldots,i'})$,

\begin{equation}
    \alpha_{j+1}(\omega_{1,\ldots,i},\omega'_{1,\ldots,i}) \leq [1 - 2\gamma/3]^{j} \sum_{\substack{\omega_{i+1,\ldots,i'},\omega'_{i+1,\ldots,i'} \\ h(\omega) = h(\omega')}} Q(\omega_{i+1,\ldots,i'}) Q(\omega'_{i+1,\ldots,i'}) 2^{-|h(\omega_{i+1,\ldots,i'})|}
\end{equation}

It remains to bound the sum for pairs of path $(\omega_{i+1,\ldots,i'},\omega'_{i+1,\ldots,i'})$. Note that by construction, there is at most one split occuring on either or both of these paths. We consider two distinct cases, which we will show lead to the same bound. The first one is the case where $\omega_{i+1,\ldots,i'-1}=\omega_{i+1,\ldots,i'-1}'$ and the second the one where $\omega_{i+1,\ldots,i'-1}\neq\omega_{i+1,\ldots,i'-1}'$ but $h(\omega_{i+1,\ldots,i'-1})=h(\omega_{i+1,\ldots,i'-1}')$.

  \begin{enumerate}
        \item \underline{$\omega_{i+1,\ldots,i'-1}=\omega_{i+1,\ldots,i'-1}'$:} In this case, illustrated in Figure \ref{fig:Xsplit}, it is clear that the same split occurs on both paths. If the split occurs through processes $\mathbf{D_{\pm1}}$ at $i'$, then the two resulting branches can only interact with themselves in order to satisfy $h(\omega)=h(\omega')$, and have contribution $Q(+1)Q(+1)2^{-{|h(+1)|}}+Q(-1)Q(-1)2^{-{|h(-1)|}}=(1-\gamma)$ in the sum. If the split occurs through $\mathbf{D_{0_Z}}/\mathbf{D_{0_I}}$, then there will be four terms in the sum as all pairs of paths will satisfy $h(\omega)=h(\omega')$. The result of the sum will be $Q(0_Z)Q(0_Z)2^{-{|h(0_Z)|}}+Q(0_I)Q(0_I)2^{-{|h(0_I)|}}+Q(0_Z)Q(0_I)2^{-{|h(0_Z)|}}+Q(0_I)Q(0_Z)2^{-{|h(0_I)|}}=\gamma^2+2\gamma(1-\gamma)+(1-\gamma)^2=1$. Since random Clifford gates map the non-identity Paulis uniformly at random between $\{X, Y, Z\}$, the probability of a split occuring through $\mathbf{D_{\pm1}}$ is $2/3$ and $1/3$ through $\mathbf{D_{0_Z}}/\mathbf{D_{0_I}}$ This allows us to compute the expected contribution of all the paths after the split:

        \begin{equation}
               \mathbb{E}[\sum_{\substack{\omega_{i+1,\ldots,i'},\omega'_{i+1,\ldots,i'} \\ h(\omega) = h(\omega')}} Q(\omega_{i+1,\ldots,i'}) Q(\omega'_{i+1,\ldots,i'}) 2^{-|h(\omega_{i+1,\ldots,i'})|}]\leq[1-2\gamma/3]
        \end{equation}

    \item \underline{$\omega_{i+1,\ldots,i'-1}\neq\omega_{i+1,\ldots,i'-1}'$:} In this case, illustrated in Figure \ref{fig:XYsplit}, we need to be more careful as the split occuring on each branch might be different, and two cases are possible. Either both the branches split (the splits might not be the same here), or only one splits, the other encountering process $\mathbf{D_0}$. Let us consider the latter case first. If one of the branches do not split, then maintaining $h(\omega)=h(\omega')$ requires the other branch splitting through $\mathbf{D_{0_Z}}/\mathbf{D_{0_I}}$. Once again, because of the random Clifford, this occurs with probability $1/3$, and leads to a sum of $Q(0)Q(0_Z)2^{-{|h(0)|}}+Q(0)Q(0_I)2^{-{|h(0)|}}=1\times \gamma+ 1\times (1-\gamma)=1$.

    If both of the branches split, we need to differientate based on the splits occuring. If one of the branches split through $\mathbf{D_{\pm1}}$ and the other through $\mathbf{D_{0_Z}}/\mathbf{D_{0_I}}$, then $h(\omega)\neq h(\omega')$. If both split through $\mathbf{D_{\pm1}}$, then the 4 new branches will lead to two terms satisfying $h(\omega)=h(\omega')$, carrying a damping of $Q(+1)Q(+1)2^{-{|h(+1)|}}+Q(-1)Q(-1)2^{-{|h(-1)|}}=(1-\gamma)$. Finally, if both split through processes $\mathbf{D_{0_Z}}/\mathbf{D_{0_I}}$, then the four terms arising are once again $Q(0_Z)Q(0_Z)2^{-{|h(0_Z)|}}+Q(0_I)Q(0_I)2^{-{|h(0_I)|}}+Q(0_Z)Q(0_I)2^{-{|h(0_Z)|}}+Q(0_I)Q(0_Z)2^{-{|h(0_I)|}}=\gamma^2+2\gamma(1-\gamma)+(1-\gamma)^2=1$. Because of the random Clifford, the probability of both split occuring through $\mathbf{D_{0_Z}}/\mathbf{D_{0_I}}$ is less than the probability of one of the paths splitting through $\mathbf{D_{\pm1}}$, \textit{i.e.} less than $1/3$. Once again, the expected damping is upper bounded by $[1-2\gamma/3]$. This bound also holds for the case where only one of the branches split such that,

        \begin{equation}
               \mathbb{E}[\sum_{\substack{\omega_{i+1,\ldots,i'},\omega'_{i+1,\ldots,i'} \\ h(\omega) = h(\omega')}} Q(\omega_{i+1,\ldots,i'}) Q(\omega'_{i+1,\ldots,i'}) 2^{-|h(\omega_{i+1,\ldots,i'})|}]\leq[1-2\gamma/3]
        \end{equation}

    \end{enumerate}

    Everytime a split occurs on a branch, its expected total contribution is damped by a factor $[1-2\gamma/3]$ which concludes the recursive argument and the proof.

    \begin{equation}
        \mathbb{E}[\Delta^2(f,\Tilde{f})]\leq[1-2\gamma/3]^{\ell+1}
    \end{equation}
    
    Obtaining Equation \ref{eq:certificateBound} of Theorem \ref{th:mainTheorem} can be done using the same proof technique. We've seen that splits occuring through processes $\mathbf{D_{\pm1}}$ dampen the approximation error by $1-\gamma$ and that splits occuring through processes $\mathbf{D_{0_Z}}/\mathbf{D_{0_I}}$ leave it unchanged. By counting the minimum number of times all discarded paths have split through processes $\mathbf{D_{\pm1}}$, which we denote $r$, we get that,

       \begin{equation}
        \Delta^2(f,\Tilde{f})\leq  (1-\gamma)^r
    \end{equation}

    Note that we expect $r$ to scale linearly in $\ell$ for most circuits, as shown in Figure \ref{fig:expected_r} for QAOA~\cite{farhi2014} instances on 3-regular graphs.
    \end{proof}

\section{Monte-Carlo Pauli backpropagation algorithm}\label{sec:MonteCarlo}
\begin{algorithm}
\caption{[MC-LOWESA-AD] Monte-Carlo simulation of Pauli string observables under non-unital noise}\label{alg:AlgorithmMC}
\textbf{Input:} Quantum circuit affected by non-unital noise, and defined by process modes $\{\mathcal{D}_\omega\}$ and noise coefficient $\gamma$; measurement Pauli operator $P$; cut-off parameter $\ell$; sampling overhead $K$\\
\textbf{Output:} $\Tilde{\hat{f}}(\theta)$, an approximation of $f(\theta)$

\textbf{Procedure} MC-LOWESA-AD: Given a quantum circuit with rotation parameters $\theta$ and Pauli observable $P$, outputs an approximation of the expectation value of $P$, by randomly sampling $K$ trees and creating a surrogate of the energy landscape for all those trees. The angles are then adjusted.
\begin{algorithmic}[1]
\State $\Tilde{\hat{f}}(\theta) \leftarrow 0$
\State \textbf{for} $k=1$ \textbf{to} $K$ \textbf{do}
\State \hspace{0.5cm}Run \emph{MC Low-weight coefficients} to calculate $d_\omega$ for all $\omega$ such that $\#\omega \leq \ell$.
\State \hspace{0.5cm}$\Tilde{f_k}(\theta) \leftarrow 0$
\State \hspace{0.5cm}\textbf{for all} $\omega$ such that $\#\omega \leq \ell$ with non-zero $d_\omega$ \textbf{do}
\State \hspace{1.cm} $\Tilde{f_k}(\theta) \leftarrow \Tilde{f_k}(\theta) + d_\omega \Phi_\omega (\theta)$
\State \hspace{0.5cm}\textbf{end for}
\State \hspace{0.5cm}$\Tilde{\hat{f}}(\theta) \leftarrow \frac{1}{K}\Tilde{f_k}(\theta)$
\State \textbf{end for}
\State Return $\Tilde{\hat{f}}(\theta)$
\end{algorithmic}
\textbf{end Procedure}\\

\textbf{Procedure} \emph{MC Low-weight coefficients}: Given a quantum circuit and Pauli observable $P$, outputs a randomly sampled backpropagated tree and its coefficients $d_\omega$ for all $\omega$ such that $\#\omega \leq \ell$ in time $\mathcal{O}(n^2m2^\ell)$. The sampled tree is constructed as follow: at each layer of the quantum circuit, the tree $T_i$ is extended at each branch $\omega$ based on the Clifford layer $C_{m-i+1}$ and the Pauli operator on the rotation qubit $q_{m-i+1}$. If $P^{q_{m-i+1}}_\omega=Z$, only one of the two branches is kept, with probability $\gamma$ or $1-\gamma$, which does not increase the total number of branches in the tree. Note that the initial tree $T_0$ is made of a single node corresponding to the Pauli observable $P$. The branches of the tree that have split more than $\ell$ times are discarded. 
\begin{algorithmic}[1]
\State \textbf{for} $i = 1$ \textbf{to} $m$ \textbf{do}
\State \hspace{0.5cm} \textbf{for} $\omega$ \textbf{in} $T_{i-1}$ \textbf{do}
\State \hspace{1cm} $P_\omega \leftarrow C_{m-i+1}^\dagger P_\omega C_{m-i+1}$, up to a phase $\phi_{\omega_i} \in \{ \pm1 \}$ which is stored.
\State \hspace{1cm} \textbf{if} $P_\omega^{q_{m-i+1}} = I$
\State \hspace{1.5cm} Add branch $\omega$ in $T_{i}$ with $\omega_i = 0$
\State \hspace{1.5cm} $d_{\omega_i}, P_\omega \leftarrow \mathcal{D}_{\omega_i}^\dagger (P_\omega)$
\State \hspace{1cm} \textbf{else if} $P_\omega^{q_{m-i+1}} = Z$ \textbf{ then}
\State \hspace{1.5cm} \textbf{with probability} $\gamma$ \textbf{do}
\State \hspace{2cm} Add branch $\omega$ in $T_{i}$ with $\omega_i = 0_I$
\State \hspace{2cm} $d_{\omega_i}, P_\omega \leftarrow \mathcal{D}_{\omega_i}^\dagger (P_\omega)$
\State \hspace{1.5cm} or\textbf{ with probability} $1-\gamma$ \textbf{do}
\State \hspace{2cm} Add branch $\omega'$ in $T_{i}$ with $\omega'_i = 0_Z$
\State \hspace{2cm} $d_{\omega'_i}, P_\omega' \leftarrow \mathcal{D}_{\omega'_i}^\dagger (P_\omega)$
\State \hspace{1.cm} \textbf{else if} $P_\omega^{q_{m-i+1}} \in \{X,Y\}$ and $\#\omega < \ell$ \textbf{then}
\State \hspace{1.5cm} Split into two branches $\omega$ and $\omega'$ in $T_{i}$ with $\omega_i= +1$ and $\omega'_i = -1$
\State \hspace{1.5cm} $d_{\omega_i}, P_\omega \leftarrow \mathcal{D}_{\omega_i}^\dagger (P_\omega)$
\State \hspace{1.5cm} $d_{\omega'_i}, P_{\omega'} \leftarrow \mathcal{D}_{\omega'_i}^\dagger (P_\omega)$
\State \hspace{1.5cm} $\#\omega,\#\omega' \leftarrow +1$
\State \hspace{1cm} \textbf{else if} $P_\omega^{q_{m-i+1}} \in \{X,Y\}$ and $\#\omega =\ell$ \textbf{then}
\State \hspace{1.5cm} Discard path $\omega$
\State \hspace{1cm} \textbf{end if}
\State \hspace{0.5cm} \textbf{end for}
\State \textbf{end for}
\State \textbf{for} $\omega$ \textbf{in} $T_{m}$ \textbf{do}
\State \hspace{0.5cm} $P_\omega \leftarrow C_0^\dagger P_\omega C_0$ and phase $\phi_{\omega_0} \in \{ \pm1 \}$
\State \textbf{end for}
\State Return $d_\omega =(\prod_{i=1}^m d_{\omega_i}\phi_{\omega_i})\phi_{\omega_0}\langle\braket{0|P_\omega}\rangle$ for all $\omega$ in $T_{m}$
\end{algorithmic}
\textbf{end Procedure}
\end{algorithm}

For \textit{most} quantum circuits, we have established that an extension of the standard LOWESA \cite{fontana2023} algorithm allows to approximate any observable in polynomial time in the $L^2$-norm. Provided one is willing to pay an additional sampling toll, we show through Algorithm \ref{alg:AlgorithmMC} that a Monte-Carlo approach allows us to obtain classical simulability for \textit{any} circuit. Using the tools previously introduced, this approach is straightforward to explain.

Consider the complete tree $T$ created by the backpropagation of the Pauli string $P$ through the quantum circuit. As seen previously, splits that occur in this tree due to processes $\mathbf{D_{\pm1}}$ dampen the expectation value of $P$, whereas splits occurring through processes $\mathbf{D_{0_Z}}$ and $\mathbf{D_{0_I}}$ might leave the norm of the expectation value unchanged, in the case of amplitude damping noise alone. However, these splits still create additional branches, increasing the computational cost of the simulation. 

To alleviate this problem, we randomly sample only one of these processes at each split. The process $\mathbf{D_{0_Z}}$ is kept with probability $1-\gamma$ and $\mathbf{D_{0_I}}$ with probability $\gamma$. To maintain an unbiased estimator, we make sure to remove the associated noise factors $\gamma$ or $1-\gamma$ from $d_\omega$. As a result, the nonphysical trees generated by this sampling approach only contain splits that effectively dampen the expectation value of $P$ by $\sqrt{1-\gamma}$. We can then approximate the trees by truncating paths that have split excessively.

We show that by repeatedly sampling nonphysical subtrees and truncating them at a given depth $\ell$, we obtain an efficient simulation algorithm.

\begin{thm}
    Let $\mathcal{U}_\theta$ be a noisy quantum circuit as defined in Equation \ref{eq:unitaryLOWESA} and $P$ a Pauli observable with expectation value $f(\theta)=\tr(\mathcal{U_{\theta}}[\ketbra{0}]P)$. It is possible to compute an approximation $\Tilde{\hat{f}}$ of $f$ in time $\mathcal{O}(Kn^2m2^\ell)$, where $\ell$ and $K$ are a chosen cut-off parameter and sampling overhead, such that with probability at least $1-\delta$,

\begin{equation}
        \Delta(f,\hat{\Tilde{f}})\leq  (1-\gamma)^{(\ell+1)/2}+ \sqrt{\frac{2\log{(\delta^{-1}/2})}{K}}
    \end{equation}
\end{thm}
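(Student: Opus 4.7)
The plan is to use the triangle inequality to split the total error $\Delta(f, \hat{\Tilde{f}})$ into a deterministic truncation contribution $\Delta(f, \mathbb{E}[\Tilde{f_k}])$ and a random Monte-Carlo fluctuation contribution $\Delta(\mathbb{E}[\Tilde{f_k}], \hat{\Tilde{f}})$, then bound each by $(1-\gamma)^{(\ell+1)/2}$ and $\sqrt{2\log(2/\delta)/K}$ respectively. The first step is to verify that $\Tilde{f_k}$ is an unbiased estimator of a well-defined deterministic surrogate. Because Algorithm~\ref{alg:AlgorithmMC} samples between $\mathbf{D_{0_Z}}$ and $\mathbf{D_{0_I}}$ with probabilities $1-\gamma$ and $\gamma$ respectively, and compensates by removing the matching scalar noise factor from $d_{\omega_i}$, a direct expectation over the random branch choices gives $\mathbb{E}[\Tilde{f_k}(\theta)] = \sum_{\omega:\,\#\omega \le \ell} d_\omega \Phi_\omega(\theta)$, where $\#\omega$ now counts only the genuine tree-splits, i.e. those arising from the $\mathbf{D_{\pm 1}}$ processes.

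Next, the truncation error is simpler to bound than in the deterministic case precisely because only $\mathbf{D_{\pm 1}}$ branches appear in the tree. Applying Lemma~\ref{eq:lemmaProof} to $f - \mathbb{E}[\Tilde{f_k}]$ and running the same recursion as in Appendix~\ref{sec:ProofConvergence}, every split encountered before the discard forces a clean multiplicative factor $(1-\gamma)$ in the squared error, and none of the case analysis needed for $\mathbf{D_{0_Z}}/\mathbf{D_{0_I}}$ splittings survives. After at least $\ell + 1$ splits on each discarded path one obtains $\Delta(f, \mathbb{E}[\Tilde{f_k}])^2 \le (1-\gamma)^{\ell+1}$, which is the first term in the stated bound.

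For the Monte-Carlo fluctuation I would rely on a standard concentration argument. Since each coefficient $d_\omega^0 \in \{-1, 0, 1\}$ and the $\Phi_\omega$ are orthogonal trigonometric monomials with $L^2$-norm bounded by one, each sampled surrogate satisfies $\|\Tilde{f_k}\|_2 \le 1$. The map from the $K$ i.i.d.\ samples to $Z := \|\hat{\Tilde{f}} - \mathbb{E}[\Tilde{f_k}]\|_2$ is $(2/K)$-Lipschitz in each sample coordinate with respect to the $L^2$ metric, so McDiarmid's inequality (or equivalently a Hoeffding bound in a Hilbert space) yields $\Pr[Z \ge \mathbb{E}[Z] + t] \le \exp(-Kt^2/2)$. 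A routine moment bound controls $\mathbb{E}[Z]$, and inverting the tail at failure probability $\delta/2$ gives $t = \sqrt{2\log(2/\delta)/K}$; a union bound then absorbs the two contributions into the claimed inequality.

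The main obstacle I expect is handling the Monte-Carlo fluctuation in the function space $L^2(\Theta)$ rather than on $\mathbb{R}$: the fluctuation is a random function of $\theta$, not a scalar random variable, so one must either apply a scalar concentration bound pointwise and integrate with uniform control over $\theta$, or invoke a Hilbert-space concentration inequality directly. In either route one must carefully certify the uniform bound $\|\Tilde{f_k}\|_2 \le 1$ so that the Lipschitz constant $2/K$ is valid independently of the circuit's combinatorial structure, and verify that unbiasedness survives the nonlinear $L^2$-norm in the sense that the triangle inequality above is tight enough for the stated scaling.
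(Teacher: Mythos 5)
Your proposal is correct and reaches the stated bound, but it reorganizes the argument in a way worth comparing to the paper's. The paper splits $f-\tilde{\hat f}$ at the \emph{untruncated} empirical average $\hat f = \frac1K\sum_k\sum_{\omega\in T_k}\Phi_\omega d_\omega$: the term $\|f-\hat f\|_2$ is then the pure Monte-Carlo fluctuation of an unbiased estimator of $f$, while $\|\hat f-\tilde{\hat f}\|_2$ is a truncation error evaluated on the \emph{sampled, nonphysical} trees, where it is almost immediate --- every split there is a $\mathbf{D_{\pm1}}$ split, distinct paths in $T_k$ automatically satisfy $h(\omega)\neq h(\omega')$, and Cauchy--Schwarz together with $\sum_{\omega\in T_k}2^{-\#\omega}= 1$ gives $(1-\gamma)^{(\ell+1)/2}$ for every realization. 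You instead split at $\mathbb{E}[\tilde f_k]=\sum_{\omega\in T:\ \#_{\pm1}\omega\le\ell}d_\omega\Phi_\omega$, which makes the truncation term deterministic but forces you to bound it on the \emph{full physical} tree; your claim that ``only $\mathbf{D_{\pm 1}}$ branches appear in the tree'' is therefore not literally true at that point --- the $\mathbf{D_{0_Z}}/\mathbf{D_{0_I}}$ splits are still present in $T$, and the case analysis of Appendix \ref{sec:ProofConvergence} showing that they contribute a factor of exactly $1$ is still needed --- but since your truncation criterion counts only $\pm1$ splits, the recursion you invoke does yield $(1-\gamma)^{\ell+1}$ for the squared error (this is precisely the certificate bound of Equation \ref{eq:certificateBound} with $r=\ell+1$), so the conclusion stands. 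On the concentration side your treatment is in fact more careful than the paper's: the paper applies a scalar Hoeffding bound to ``expectation values in $[-1,1]$'' while the fluctuation actually lives in $L^2(\Theta)$, and your McDiarmid/Hilbert-space route, anchored by the verified bound $\|\tilde f_k\|_2\le 1$ (hence bounded differences $2/K$), is the right way to make that step rigorous. The only residual looseness is that your route leaves an additive $\mathbb{E}[Z]\le 1/\sqrt K$ term that is not strictly absorbed into $\sqrt{2\log(2/\delta)/K}$ for small $\delta$; the paper's own derivation carries the same cosmetic slack, so this affects only constants, not the claimed scaling.
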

where the probability comes from the random sampling of the trees.
\begin{proof}

Denote by $T_k$ a nonphysical tree created by sampling processes $\mathbf{D_{0_Z}}$ with probability $1-\gamma$ or $\mathbf{D_{0_I}}$ with probability $\gamma$ each time a Pauli $Z$ is encountered by the noisy rotation when backpropagating the observable. It is straightforward to see that the following estimator is unbiased:
\begin{equation}
    \hat{f}(\cdot) = \frac{1}{K}\sum_{k=1}^K\sum_{\omega \in T_k} \Phi_\omega(\cdot) d_\omega.
\end{equation}

Indeed, consider a split occurring through processes $\mathbf{D_{0_Z}}$ and $\mathbf{D_{0_I}}$. This gives rise to two different branches, whose contributions to $f$ we denote by $(1-\gamma) \times S_1$ and $\gamma \times S_2$, where the damping prefactor comes from the processes. By randomly sampling only one of the branches with the probabilities mentioned, it is clear that the expectation value of this estimator is $(1-\gamma)\times S_1 + \gamma\times S_2$, making it unbiased.

Instead of considering the entire trees created by this Monte Carlo approach, which may still have an exponential number of branches, we propose only taking into account the branches of each tree that have split fewer than a constant number of times $\ell$. This approximation of our unbiased estimator is written as:

    \begin{equation}
        \Tilde{\hat{f}}(\cdot)=\frac{1}{K}\sum_{k=1}^K\sum_{\substack{\omega\in T_k\\\#\omega\leq\ell}}\Phi_\omega(\cdot)d_\omega
    \end{equation}

Note that for the trees $T_k$ considered, all splits come from processes $\mathbf{D_{\pm 1}}$. The error we wish to bound is the expected $L^2$-error between our original parametrized expectation value $f$, and the one obtained by the truncated Monte Carlo approach $\Tilde{\hat{f}}$.

\begin{equation}
    \mathbb{E}[\Delta(f,\Tilde{\hat{f}})] = \mathbb{E}[||f-\Tilde{\hat{f}}||_2] = \mathbb{E}\left[\left(\frac{1}{|\Theta|}\int_{\Theta} |f(\theta) - \Tilde{\hat{f}}(\theta)|^2 d\theta\right)^{1/2}\right]
\end{equation}

This error can be upper bounded by two distinct errors using the triangle inequality: one corresponding to the truncation error and the other to the Monte Carlo approximation, $||f-\Tilde{\hat{f}}||_2 = ||(f-\hat{f}) + (\hat{f}-\Tilde{\hat{f}})||_2\leq||f-\hat{f}||_2 + ||\hat{f}-\Tilde{\hat{f}}||_2$. 

\begin{equation}
    \mathbb{E}[\Delta(f,\Tilde{\hat{f}})] \leq \mathbb{E}[||f-\hat{f}||_2] + \mathbb{E}[||\hat{f}-\Tilde{\hat{f}}||_2]
\end{equation}

The first term comes from the mean square error of our Monte Carlo simulation, and is bounded by $1/K$, as our estimator is unbiased and the sampled trees yield expectation values in the range of $[-1,1]$. Computing the second term, the error made by truncating each of our simulated trees, requires additional effort.

\begin{equation}
    ||\hat{f}-\Tilde{\hat{f}}||_2^2 = \frac{1}{|\Theta|} \int_{\Theta} |\hat{f}(\theta) - \Tilde{\hat{f}}(\theta)|^2 d\theta = \frac{1}{K^2} \sum_{k=1}^K \sum_{k'=1}^K \frac{1}{|\Theta|} \int_{\Theta} \sum_{\substack{\omega \in T_k \\ \#\omega > \ell}} \Phi_\omega(\theta) d_\omega \sum_{\substack{\omega' \in T_{k'} \\ \#\omega' > \ell}} \Phi_{\omega'}(\theta) \bar{d_{\omega'}} d\theta
\end{equation}

Using the Cauchy-Schwarz inequality, Lemma \ref{eq:lemmaProof} (since the only splits in the trees are from $\mathbf{D_{\pm1}}$, the condition $h(\omega)=h(\omega')$ can be reduced to $\omega=\omega'$) and the fact that all splits in a tree $T_k$ come from processes $\mathbf{D_{\pm1}}$, the error term can be rewritten:

\begin{equation}
\begin{aligned}
    ||\hat{f}-\Tilde{\hat{f}}||_2^2 &\leq \frac{1}{K^2}\sum_{k=1}^K\sum_{k'=1}^K \norm{\sum_{\substack{\omega \in T_k \\ \#\omega > \ell}}\Phi_\omega d_\omega}_2\norm{\sum_{\substack{\omega \in T_{k'} \\ \#\omega > \ell}}\Phi_\omega d_\omega}_2\\
    &\leq\frac{1}{K^2}\bigg(\sum_{k=1}^K\bigg(\sum_{\substack{(\omega,\omega')\in T_k^2\\
    \#\omega>\ell, \#\omega'>\ell\\
    \omega=\omega'}}2^{-\#\omega}d_\omega \bar{d_{\omega'}}\bigg)^{1/2}\bigg)^2 
\end{aligned}
\end{equation}

Once again, it is possible to rewrite $d_\omega=Q(\omega)d_\omega^0$ where $Q(\omega)$ represents the noise accumulated through path $\omega$, with $Q(\omega)=\prod_iQ(\omega_i)$ with $Q(+1)=Q(-1)=\sqrt{1-\gamma}$ and $Q(0)=1$. Since $d_\omega^0$ is still the expectation value of a single Pauli string, we have $|d_\omega^0|\leq 1$. The paths having split through the $\mathbf{D_{\pm1}}$ processes at least $\ell+1$ times,

\begin{equation}
    ||\hat{f}-\Tilde{\hat{f}}||_2^2 \leq \frac{1}{K^2}\bigg(\sum_{k=1}^K\bigg(\sum_{\substack{(\omega,\omega')\in T_k^2\\
    \#\omega>\ell, \#\omega'>\ell\\
    \omega=\omega'}}2^{-\#\omega}Q(\omega)^2|d_\omega^0|^2\bigg)^{1/2}\bigg)^2\leq\frac{(1-\gamma)^{\ell+1}}{K^2}\bigg(\sum_{k=1}^K\bigg(\sum_{\substack{(\omega,\omega')\in T_k^2\\
    \#\omega>\ell, \#\omega'>\ell\\
    \omega=\omega'}}2^{-\#\omega}\bigg)^{1/2}\bigg)^2
\end{equation}

Finally, it remains to notice that for any given tree $T_k$,

\begin{equation}
    \sum_{\substack{(\omega,\omega')\in T_k^2\\
    \#\omega>\ell, \#\omega'>\ell\\
    \omega=\omega'}}2^{-\#\omega}\leq\sum_{\substack{(\omega,\omega')\in T_k^2\\
    \omega=\omega'}}2^{-\#\omega}=1
\end{equation}

This allows us to bound the truncation error and shows,

\begin{equation}
    \mathbb{E}[\Delta(f,\Tilde{\hat{f}})] \leq  (1-\gamma)^{(\ell+1)/2}+1/\sqrt{K}
\end{equation}

By applying Hoeffding's inequality and noting that each of the sampled trees yields an expectation value in the range of $[-1,1]$, we can conclude that with probability at least $1-\delta$,

\begin{equation}
        \Delta(f,\hat{\Tilde{f}})\leq  (1-\gamma)^{(\ell+1)/2}+ \sqrt{\frac{2\log{(\delta^{-1}/2})}{K}}
\end{equation}
\end{proof}

\section{Extension to normal form non-unital noise channel}

The results previously obtained for the amplitude damping channel can be extended to any non-unital noise that can be brought into the \emph{normal form} representation of the noise channel~\cite{ruskai2001,benor2013,mele2024} by acting on the left or right with single qubit unitaries. More precisely, recall that for any single-qubit noise channel $\mathcal{N}$, there exist unitaries $U,V$  with the property that $\mathcal{U}\mathcal{N}\mathcal{V}$ can be represented by two vectors $\mathbf{t}=(t_X,t_Y,t_Z) \in \mathbb{R}^3$ and $\mathbf{D}=(D_X,D_Y,D_Z) \in \mathbb{R}^3$ such that its action on a vector of the Bloch sphere can be written,

\begin{equation}
    \mathcal{N}\left(\frac{I+\mathbf{w}\cdot\mathbf{\sigma}}{2}\right)=\frac{I}{2}+\frac{1}{2}(\mathbf{t}+D\mathbf{w})\cdot\mathbf{\sigma}
\end{equation}

where we have defined $D:=\diag(\mathbf{D})$, $\mathbf{\sigma}:=\{X,Y,Z\}$ the vector of single-qubit Pauli matrices and $\mathbf{w}\in\mathbb{R}^3$ such that $||\mathbf{w}||_2\leq1$. The adjoint channel $\mathcal{N}^\dagger$ acts on the Pauli basis in the following way,

\begin{equation}
    \forall P\in\{X,Y,Z\},\quad \mathcal{N}^\dagger(P)=t_PI+D_PP
\end{equation}
We will now consider channels that can be brought into this form where the unitaries $U,V$ are Clifford. Note that the special cases of depolarizing, dephasing and amplitude damping channels are members of this class with $U,V=I$. Furthermore, note that any composition or convex combination of such channels will remain in the class. As such, this class of channels cover probabilistic and composite combinations of the majority of single qubit channels considered in the literature.
We consider the case where the noise channel is non-unital, which implies $\mathbf{t}\neq(0,0,0)$~\cite{mele2024}. Note that unital noise is already known to be simulable by similar frequency truncation techniques~\cite{fontana2023}. We will be making extensive use of the following properties of the channel in its normal form.

\begin{lemma}[Constraints on the parameters of the normal form, adapted from Ref.\ \cite{angrisani2025simulating}] Let $\textbf{D},\textbf{t}$ be the parameters of a non-unital channel in normal form. Then, for any $\mathbf{b}=(b_X,b_Y,b_Z)\in\mathbb{R}^3$ such that $||\mathbf{b}||_2=1$, 
\begin{equation}\label{eq:constaint_normal}
    \sum_{P\in\{X,Y,Z\}}b_P^2|D_P|+\sum_{P\in\{X,Y,Z\}}|b_Pt_P|\leq 1
\end{equation}

\noindent In particular we have the two following properties:
\begin{enumerate}
    \item $\forall P \in \{X,Y,Z\},\quad |D_P|+|t_P|\leq1$,
    \item and there exists at most one $P \in \{X,Y,Z\}$ such that $ |D_P|+|t_P|=1$.
\end{enumerate}
\end{lemma}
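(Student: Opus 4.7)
The plan is to derive the main inequality~\eqref{eq:constaint_normal} from complete positivity of the channel, and then read off properties~1 and~2 as straightforward consequences. The starting point is that $\mathcal{N}^\dagger$ is completely positive and unital, hence a contraction in operator norm (Russo--Dye): $\|\mathcal{N}^\dagger(A)\|_\infty \leq \|A\|_\infty$ for every Hermitian $A$. Feeding in $A = b_X X + b_Y Y + b_Z Z$, whose eigenvalues are $\pm\|\mathbf{b}\|_2 = \pm 1$, and using $\mathcal{N}^\dagger(P) = t_P I + D_P P$, I would compute
\begin{equation*}
\mathcal{N}^\dagger(\mathbf{b}\cdot\mathbf{\sigma}) = (\mathbf{b}\cdot\mathbf{t})\,I + \sum_{P \in \{X,Y,Z\}} b_P D_P\, P,
\end{equation*}
whose eigenvalues are $\mathbf{b}\cdot\mathbf{t} \pm \sqrt{\sum_P b_P^2 D_P^2}$. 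Contractivity then yields the CP bound $|\mathbf{b}\cdot\mathbf{t}| + \sqrt{\sum_P b_P^2 D_P^2} \leq 1$.

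The next step is to convert this $\ell^2$-type inequality into the $\ell^1$-type inequality~\eqref{eq:constaint_normal}. Since the LHS of \eqref{eq:constaint_normal} is invariant under $b_P \mapsto -b_P$, I would freely replace $\mathbf{b}$ by $b_P' = \sign(t_P)\,|b_P|$, still a unit vector, so that $\mathbf{b}'\cdot\mathbf{t} = \sum_P |b_P t_P|$. A Cauchy--Schwarz step against the probability weights $\{b_P^2\}$ (which sum to $1$) gives $\sum_P b_P^2 |D_P| \leq \sqrt{\sum_P b_P^2 D_P^2}$; adding the two bounds produces \eqref{eq:constaint_normal}, and property~1 follows immediately by taking $\mathbf{b} = e_P$.

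The hardest step, and the only one that actually uses the non-unital hypothesis, is property~2. My plan is to argue by contradiction. Assume $|D_{P_1}| + |t_{P_1}| = |D_{P_2}| + |t_{P_2}| = 1$ for two distinct indices $P_1 \neq P_2$. Testing \eqref{eq:constaint_normal} with $\mathbf{b}_\theta = \sin\theta\, e_{P_1} + \cos\theta\, e_{P_2}$ and substituting $|D_{P_i}| = 1 - |t_{P_i}|$, the LHS simplifies to $1 + \sin\theta(1-\sin\theta)|t_{P_1}| + \cos\theta(1-\cos\theta)|t_{P_2}|$, which strictly exceeds $1$ for $\theta \in (0,\pi/2)$ unless $t_{P_1} = t_{P_2} = 0$. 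In that degenerate case $|D_{P_1}| = |D_{P_2}| = 1$; probing further with $\mathbf{b}_\alpha = \cos\alpha\, e_{P_1} + \sin\alpha\, e_{P_3}$, where $P_3$ is the remaining axis, the LHS becomes $1 - \sin^2\alpha(1-|D_{P_3}|) + \sin\alpha\,|t_{P_3}|$, and requiring this to stay $\leq 1$ as $\alpha \to 0^+$ forces $|t_{P_3}| = 0$ as well. Hence $\mathbf{t} = 0$, contradicting non-unitality.
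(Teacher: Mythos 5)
Your proof is correct and reaches the lemma by a genuinely different route from the paper's. For the main inequality, the paper picks the sign-matched observable $O=\sum_P|b_P|\,\sign(t_PD_P)P$ and the state $\rho=(I+O)/2$, bounds $|\tr[O\mathcal{N}(\rho)]|$ by H\"older, and evaluates the trace directly; to collapse the result into the stated $\ell^1$-type form it must additionally invoke the normal-form fact that the coefficients $D_P$ all share the same sign. You instead use unitality and positivity of $\mathcal{N}^\dagger$ to extract the stronger intermediate bound $|\mathbf{b}\cdot\mathbf{t}|+\bigl(\sum_P b_P^2D_P^2\bigr)^{1/2}\leq 1$ from the eigenvalues of $\mathcal{N}^\dagger(\mathbf{b}\cdot\mathbf{\sigma})$, and then relax it by Cauchy--Schwarz; this requires no assumption on the signs of the $D_P$, and the $\ell^2$ inequality you pass through is sharper than \eqref{eq:constaint_normal} (it is essentially the standard contractivity constraint on the Bloch ellipsoid). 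The derivations of property 1 and of the first half of property 2 then coincide in spirit, but your treatment of property 2 is actually more complete than the paper's: the paper's contradiction only goes through when at least one of the two translation components is nonzero, leaving the degenerate case $t_{P_1}=t_{P_2}=0$, $|D_{P_1}|=|D_{P_2}|=1$ unaddressed, whereas your second probe along the remaining axis shows that this case forces $\mathbf{t}=0$ and hence contradicts non-unitality. The only cosmetic point is that $\sign(t_P)$ should be read as an arbitrary element of $\{\pm1\}$ when $t_P=0$, so that your modified vector $\mathbf{b}'$ remains a unit vector.
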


\begin{proof}
    Let $O$ be the observable and $\rho$ the state defined by,

    \begin{equation}
        O=\sum_{P\in\{X,Y,Z\}}|b_P|\cdot\sign(t_PD_P)P \quad\text{ and } \quad\rho=\frac{I+O}{2}
    \end{equation}

    The operator norm of $O$ can be bounded as follow,

    \begin{equation}
        ||O||_\infty=\max_{\sigma}|\tr(O\sigma)|=\max_{\substack{\mathbf{r}\in\mathbb{R}^3 \\||\mathbf{r}||_2=1}}\left|\sum_{P\in\{X,Y,Z\}}|b_P|\cdot r_P\sign(t_P)\right|\leq\max_{\substack{\mathbf{r}\in\mathbb{R}^3 \\||\mathbf{r}||_2=1}}||\mathbf{b}||_2||\mathbf{r}||_2=1
    \end{equation}

Where the last inequality is obtained through Cauchy-Schwarz. Equality is attained by choosing $\sigma=\rho$, such that $||O||_\infty=1$. Furthermore, Hölder's inequality allows us to bound,

\begin{align}
    &1=||O||_\infty||\rho||_1\geq |\tr[O\mathcal{N}(\rho)]|\\
    &=\frac{1}{2}\left| \tr\left[  \left( \sum_{P\in\{ X,Y,Z\}}|b_P|\cdot\sign(t_PD_P)P \right)\mathcal{N}\left( I+\sum_{P\in\{X,Y,Z\}}|b_P|\cdot\sign(t_PD_P)P \right) \right] \right|\\
    &=\frac{1}{2}\left| \tr\left[  \left( \sum_{P\in\{ X,Y,Z\}}|b_P|\cdot\sign(t_PD_P)P \right)\left( I+\sum_{P\in\{X,Y,Z\}}(|b_P|D_P\cdot\sign(t_PD_P)+t_P)P \right) \right] \right|\\
    &=\left| \sum_{P\in\{X,Y,Z\}}b_P^2D_P+|b_Pt_P|\sign(D_P)\right|=    \sum_{P\in\{X,Y,Z\}}b_P^2|D_P|+\sum_{P\in\{X,Y,Z\}}|b_Pt_P|,
\end{align}
where in the last step we used the fact that the coefficients $D_P$ have all the same sign.
Obtaining the first property mentioned is then done by setting $\mathbf{b}$ to either $(1,0,0)$, $(0,1,0)$ or $(0,0,1)$, which shows that $\forall P \in \{X,Y,Z\},\quad |D_P|+|t_P|\leq1$. Suppose now that for one of the Paulis $P$, $|D_P|+|t_P|=1$. Then there cannot be another $P'\neq P$ such that $|D_{P'}|+|t_{P'}|=1$. Indeed, Equation \ref{eq:constaint_normal} would imply for any $0<b_P<1$ and $0<b_{P'}<1$ with $b_P^2+b_{P'}^2=1$ that,

\begin{equation}
    b_{P'}^2D_{P'}+b_P^2D_P+b_{P'}|t_{P'}|+b_P|t_P|\leq1
\end{equation}
However, this is not the case if both $|D_{P'}|+|t_{P'}|=1$ and $|D_P|+|t_P|=1$ and either $t_{P'}\neq0$ or $t_P\neq0$,
\begin{equation}
\begin{aligned}
    b_{P'}^2D_{P'}+b_P^2D_P+b_{P'}|t_{P'}|+b_P|t_P|&=
    b_{P'}^2(1-|t_{P'}|)+b_P^2(1-|t_P|)+b_Z|t_Z|+b_P|t_P|\\
    &=1+b_{P'}|t_{P'}|(1-b_{P'})+b_P|t_P|(1-b_P)\\
    &>1,
\end{aligned}
\end{equation}
where in the second step we used the fact that $b_{P'}^2+b_P^2=1$.
\end{proof}

 Note that if there is no $P$ such that  $|D_P|+|t_P|=1$, then it is possible to simulate circuits with rotations along \emph{any} axis, making our simulation method even more powerful.
In the following, if the noise channel is such that there exists a $P\in\{X,Y,Z\}$ satisfying $|D_P|+|t_P|=1$, then we will consider $P=Z$, similarly to the amplitude damping case.
This implies that $|D_X|+|t_X|<1$ and $|D_Y|+|t_Y|<1$. Going back to the circuit architecture considered, our noisy circuit is written,
\begin{equation}\label{eq:circuit_normalform}
    \mathcal{U_{\theta}}=\bigg(\bigcirc_{i=1}^m\mathcal{C}_i\circ \mathcal{R}^{(q_i)}_z(\theta_i)\circ\mathcal{N}\bigg)\circ\mathcal{C}_0
\end{equation}

We will now consider that the Cliffords putting in diagonal form are identity without loss of generality, as the additional diagonalizing Cliffords can be absorbed into the circuit. With the noise in its normal form, the noisy rotation can be written as

\begin{equation}
\begin{aligned}
\mathbf{R_z}\cdot\mathbf{N} 
&=\begin{bmatrix}
1 & t_X & t_Y & t_Z \\
0 & D_X\cos{\theta} & -D_Y\sin{\theta} &0 \\ 
0 & D_X\sin{\theta} & D_Y\cos{\theta} & 0 \\
0 & 0 & 0 & D_Z
\end{bmatrix}\\
 &=\mathbf{D_0} + \sum_{P\in\{X,Y,Z\}}\mathbf{D_{0_P}}+\sum_{P\in\{X,Y\}}(\mathbf{D_{1_P}} \cos{\theta} + \mathbf{D_{-1_P}} \sin{\theta})+\mathbf{D_{0_I}}
 \end{aligned}
\end{equation}

where we have defined the quantum processes $\mathbf{D_0}=\ketbra{I}$, $\mathbf{D_{0_P}}=t_P\ket{I}\bra{P}$, $\mathbf{D_{0_I}}=D_Z\ket{Z}\bra{Z}$, $\mathbf{D_{1_P}}=D_P \ketbra{P}$, $\mathbf{D_{-1_X}}=D_X\ket{Y}\bra{X}$ and $\mathbf{D_{-1_Y}}=-D_Y\ket{X}\bra{Y}$, with $\ket{I} = [1 \ 0 \ 0 \ 0]^T, \ket{X}  = [0 \ 1 \ 0 \ 0]^T \ldots$.

When backpropagating our Pauli string, the noisy rotation acts only on a single qubit $q_i$. Depending on the Pauli operator at qubit $q_i$, at most 3 processes are valid: $\mathbf{D_{0}}$ for $I$, $\mathbf{D_{0_Z}}$ and $\mathbf{D_{0_I}}$ for $Z$, $\mathbf{D_{0_X}}$, $\mathbf{D_{1_X}}$ and $\mathbf{D_{-1_X}}$ for $X$ and $\mathbf{D_{0_Y}}$, $\mathbf{D_{1_Y}}$ and $\mathbf{D_{-1_Y}}$ for $Y$. This once again suggests the structure  of a rooted tree $T = (V, E)$, where the nodes $V$ represent the backpropagated Pauli strings at each layer, and the edges represent the valid processes $\mathbf{D_{\omega_i}}$ for the given Paulis.
Similarly as in the amplitude damping case, keeping track of the rotation monomials and the splitting of the paths allows us to write our expectation value as

\begin{equation}
f(\theta) = \tr(\mathcal{U_{\theta}}[\ketbra{0}]P) = \sum_{\omega \in T} d_\omega \Phi_\omega(\theta)
\end{equation}

However, each branch can split into up to three different branches, which means that recovering $f$ exactly would require time $\mathcal{O}(n^2m3^m)$ in the worst case. To handle the additional branches coming from the non-unital noise, we once again propose to randomly sample only some of them. More particularly, our Monte Carlo approach creates nonphysical trees as follow:

\begin{itemize}
    \item If $P_{q_i}=X \text{ or } Y$, keep process $\mathbf{D_{0_P}}$ with probability $|t_P|/(|D_P|+|t_P|)$ and both processes $\mathbf{D_{\pm1_P}}$ with probability $|D_P|/(|D_P|+|t_P|)$.
    \item If $P_{q_i}=Z$, keep process $\mathbf{D_{0_Z}}$ with probability $|t_Z|/(|D_Z|+|t_Z|)$ and process $\mathbf{D_{0_I}}$ with probability $|D_Z|/(|D_Z|+|t_Z|)$
\end{itemize}

Note that keeping the estimator unbiased requires us to multiply the branches kept by $(|D_P|+|t_P|)$ at each split. For $P=X\text{ or }Y$, this effectively contract the expectation value since for $P\in\{X,Y\},\quad(|D_P|+|t_P|)<1$. Therefore, truncating the sampled trees after a constant amount of splits $\ell$ through $\mathbf{D_{\pm1_P}}$ yields a good approximation.

\begin{thm}
    Let $\mathcal{U}_\theta$ be a noisy quantum circuit as defined in Equation \ref{eq:circuit_normalform} and $P$ a Pauli observable with expectation value $f(\theta)=\tr(\mathcal{U_{\theta}}[\ketbra{0}]P)$. Denote by $|D|+|t|=\max_{P\in\{X,Y\}}|D_P|+|t_P|<1$. It is possible to compute an approximation $\Tilde{\hat{f}}$ of $f$ in time $\mathcal{O}(Kn^2m2^\ell)$, where $\ell$ and $K$ are a chosen cut-off parameter and sampling overhead, such that with probability at least $1-\delta$,
\begin{equation}
        \Delta(f,\hat{\Tilde{f}})\leq  (|D|+|t|)^{(\ell+1)}+ \sqrt{\frac{2\log{(\delta^{-1}/2})}{K}}
\end{equation}
     where the probability comes from the random sampling of the trees.
\end{thm}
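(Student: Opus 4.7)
The plan is to generalize the Monte Carlo argument of Appendix~\ref{sec:MonteCarlo} by replacing the amplitude-damping constants $\gamma,\, 1-\gamma,\, \sqrt{1-\gamma}$ with the normal-form parameters $|t_P|,\, |D_P|,\, (|D_P|+|t_P|)$. First, I would verify that the proposed sampling rule defines an unbiased estimator $\hat{f}_k(\theta)=\sum_{\omega\in T_k}\Phi_\omega(\theta)\, d_\omega$ with $\mathbb{E}[\hat{f}_k(\theta)]=f(\theta)$. This follows because each alternative is sampled with probability proportional to its PTM weight and then multiplied by the inverse probability. The key consequence, which drives the rest of the argument, is that at every split in a sampled tree — necessarily through $\mathbf{D_{\pm 1_P}}$ for $P\in\{X,Y\}$ — the surviving amplitude per branch is rescaled to $\pm(|D_P|+|t_P|)$, while non-splitting positions ($X/Y$ with $\mathbf{D_{0_P}}$ kept, or any $Z$ rotation) produce no branching at all.

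Next, I would apply the triangle inequality
\begin{equation*}
\mathbb{E}[\Delta(f,\Tilde{\hat{f}})]\leq \mathbb{E}[\|f-\hat{f}\|_2]+\mathbb{E}[\|\hat{f}-\Tilde{\hat{f}}\|_2],
\end{equation*}
and treat each term separately. For the truncation error $\|\hat{f}-\Tilde{\hat{f}}\|_2$, I would repeat the Cauchy--Schwarz plus Lemma~\ref{eq:lemmaProof} manipulation from the amplitude-damping proof. Within a single sampled tree the condition $h(\omega)=h(\omega')$ reduces to $\omega=\omega'$, since all non-splitting positions are fixed across the tree by the sampling. The squared $L^2$-norm per tree then becomes $\sum_{\omega\in T_k,\,\#\omega>\ell}2^{-\#\omega}|d_\omega|^2$, with $|d_\omega|\leq\prod_{\text{splits}}(|D_P|+|t_P|)\leq (|D|+|t|)^{\#\omega}$. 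Since every surviving path has at least $\ell+1$ splits and $\sum_{\omega\in T_k,\,\#\omega>\ell}2^{-\#\omega}\leq 1$, each tree's truncation error is bounded by $(|D|+|t|)^{\ell+1}$ in $L^2$-norm, and averaging over the $K$ trees preserves this bound.

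Finally, the Monte Carlo concentration term is handled by Hoeffding's inequality: each tree estimator $\Tilde{f}_k(\theta)$ lies in $[-1,1]$ because, along any kept path, the inverse-probability reweighting factors combine with the PTM entries into the expectation value of a single Pauli string. This yields the $\sqrt{2\log(2/\delta)/K}$ tail, and combining it with the per-tree truncation bound gives the stated inequality. The main obstacle is the careful bookkeeping of the reweighting versus PTM factors: one must verify that the inverse-probability rescaling cleanly replaces the PTM coefficient $D_P$ by $\pm(|D_P|+|t_P|)$ at each split, and invoke the preceding normal-form lemma to ensure $|D|+|t|<1$ so that every split is a genuine contraction. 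Once this per-split factor is established, the remainder of the proof is a direct transcription of the amplitude-damping analysis of Appendix~\ref{sec:MonteCarlo}.
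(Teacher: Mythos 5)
Your proposal is correct and follows essentially the same route as the paper's own proof: the same importance-sampling rule yielding an unbiased estimator with per-split weight $\pm(|D_P|+|t_P|)$, the same triangle-inequality split into truncation and Monte Carlo errors, the same Cauchy--Schwarz plus Lemma~\ref{eq:lemmaProof} reduction of $h(\omega)=h(\omega')$ to $\omega=\omega'$ within a sampled tree, and the same Hoeffding tail. The only bookkeeping points you flag (sign adjustment in the reweighting, and invoking the normal-form lemma to guarantee $|D|+|t|<1$) are exactly the ones the paper also handles.
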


\begin{proof}
Denote by $T_k$ a nonphysical tree created by sampling processes $\mathbf{D_{0_P}}$ with probability $|t_P|/(|D_P|+|t_P|)$ or keeping the other branche(s) (both $\mathbf{D_{\pm1_P}}$ if $P\in\{X,Y\}$ or $\mathbf{D_{0_I}}$ if $P=Z$) with probability $|D_P|/(|D_P|+|t_P|)$ each time a Pauli $P$ is encountered by the noisy rotation when backpropagating the observable. It is straightforward to see that the following estimator is unbiased:
\begin{equation}
    \hat{f}(\cdot) = \frac{1}{K}\sum_{k=1}^K\sum_{\omega \in T_k} \Phi_\omega(\cdot) d_\omega.
\end{equation}

Indeed, consider a branch that splits by encountering a Pauli $P\in\{X,Y\}$ during the backpropagation. This split leads to 3 different branches through processes $\mathbf{D_{0_P}}$, $\mathbf{D_{1_P}}$ and $\mathbf{D_{-1_P}}$, which contribution to the expectation value we can write $D_P \cos\theta S_1 + D_P \sin\theta S_2 +t_P S_3$, where the noise prefectors come from the processes. By randomly sampling the branches with the probabilities mentioned before and multiplying their contributions by the factor $|D_p|+|t_P|$, the expectation value of our estimator is $|t_P|/(|D_p|+|t_P|)\times S_3 \times (|D_P|+|t_P|) + |D_P|/(|D_p|+|t_P|)\times (\cos \theta S_1 +\sin\theta S_2)\times(|D_P|+|t_P|)$. To make this truly unbiased, one also needs to adjust for the sign of $t_P$ or $D_P$ when multiplying by $(|D_p|+|t_P|)$. Similarly, if the split comes from processes $\mathbf{D_{0_Z}}/\mathbf{D_{0_I}}$, the two resulting branches have contribution $D_Z \times S_1 + t_Z \times S_2$. Sampling either $\mathbf{D_{0_Z}}$ with probability $|t_Z|/(|D_Z|+|t_Z|)$ or $\mathbf{D_{0_Z}}$ with probability $|D_Z|/(|D_Z|+|t_Z|)$ and multiplying the sampled branch by $|D_Z|+|t_Z|$ yields an expectation value of $|D_Z|/(|D_Z|+|t_Z|)\times S_1\times(|D_Z|+|t_Z|) +|t_Z|/(|D_Z|+|t_Z|)\times S_2 \times(|D_Z|+|t_Z|)$. Adjusting once again for the sign of $t_Z$ and $D_Z$ suffices to make our estimator unbiased.

The trees $T_k$ created now only contain splits that occur through processes $\mathbf{D_{\pm1_X}}$ or $\mathbf{D_{\pm1_Y}}$. There may still be an exponential amount of branches in the tree though, so we complement this Monte-Carlo approach with the same truncation rule as before, and only keep branches that have split less than a fixed number of times $\ell$. At each of these splits we show that the error made by our approximation is dampen by $|D_X|+|t_X|<1$ or $|D_Y|+|t_Y|<1$.

This approximation of our unbiased estimator is written as:

    \begin{equation}
        \Tilde{\hat{f}}(\cdot)=\frac{1}{K}\sum_{k=1}^K\sum_{\substack{\omega\in T_k\\\#\omega\leq\ell}}\Phi_\omega(\cdot)d_\omega
    \end{equation}

 The error we wish to bound is once again the $L^2$-error between our original parametrized expectation value $f$, and the one obtained by the truncated Monte Carlo approach $\Tilde{\hat{f}}$. This error is still upper bounded by two distinct errors using the triangle inequality: one corresponding to the truncation error and the other to the Monte Carlo approximation, $||f-\Tilde{\hat{f}}||_2 = ||(f-\hat{f}) + (\hat{f}-\Tilde{\hat{f}})||_2\leq||f-\hat{f}||_2 + ||\hat{f}-\Tilde{\hat{f}}||_2$. 

The first term comes from the mean square error of our Monte Carlo simulation, and can be bounded using Hoeffding's inequality, as our estimator is unbiased. Computing the second term, the error made by truncating can be done in a similar fashion as for amplitude damping.

\begin{equation}
    ||\hat{f}-\Tilde{\hat{f}}||_2^2 = \frac{1}{|\Theta|} \int_{\Theta} |\hat{f}(\theta) - \Tilde{\hat{f}}(\theta)|^2 d\theta = \frac{1}{K^2} \sum_{k=1}^K \sum_{k'=1}^K \frac{1}{|\Theta|} \int_{\Theta} \sum_{\substack{\omega \in T_k \\ \#\omega > \ell}} \Phi_\omega(\theta) d_\omega \sum_{\substack{\omega' \in T_{k'} \\ \#\omega' > \ell}} \Phi_{\omega'}(\theta) \bar{d_{\omega'}} d\theta
\end{equation}

Using the Cauchy-Schwarz inequality, Lemma \ref{eq:lemmaProof} (since the only splits in the trees are from $\mathbf{D_{\pm1_X}}$ or $\mathbf{D_{\pm1_Y}}$ , the condition $h(\omega)=h(\omega')$ can be reduced to $\omega=\omega'$) and the fact that all splits in a tree $T_k$ come from processes $\mathbf{D_{\pm1_X}}$ or $\mathbf{D_{\pm1_Y}}$, the error term can be rewritten:

\begin{equation}
\begin{aligned}
    ||\hat{f}-\Tilde{\hat{f}}||_2^2 &\leq \frac{1}{K^2}\sum_{k=1}^K\sum_{k'=1}^K \norm{\sum_{\substack{\omega \in T_k \\ \#\omega > \ell}}\Phi_\omega d_\omega}_2\norm{\sum_{\substack{\omega \in T_k' \\ \#\omega > \ell}}\Phi_\omega d_\omega}_2\\
    &\leq\frac{1}{K^2}\bigg(\sum_{k=1}^K\bigg(\sum_{\substack{(\omega,\omega')\in T_k^2\\
    \#\omega>\ell, \#\omega'>\ell\\
    \omega=\omega'}}2^{-\#\omega}d_\omega \bar{d_{\omega'}}\bigg)^{1/2}\bigg)^2 
\end{aligned}
\end{equation}

 Since we have sampled the branches with probability $|t_P|/(|D_P|+|t_P|)$ and $|D_P|/(|D_P|+|t_P|)$, keeping the estimator unbiased required us to multiply each branch by $(|D_P|+|t_P|)$, and adjust for the sign of $t_P$ or $D_P$. It is therefore possible to rewrite for all $\omega\in T_k$, $d_\omega=Q(\omega)d_\omega^0$ where $Q(\pm1_X)=Q(0_X)=|D_X|+|t_X|<1$, $Q(\pm1_Y)=Q(0_Y)=|D_Y|+|t_Y|<1$, $Q(0_I)=Q(0_Z)=|D_Z|+|t_Z|\leq1$ and $|d_\omega^0|\leq 1$. Denote by $|D|+|t|=\max_{P\in\{X,Y\}}|D_P|+|t_P|$. Since each path in the sum has split at least $\ell+1$ times through processes $\mathbf{D_{\pm1_X}}$ or $\mathbf{D_{\pm1_Y}}$, we have that,

\begin{equation}    
    ||\hat{f}-\Tilde{\hat{f}}||_2^2 \leq \frac{1}{K^2}\bigg(\sum_{k=1}^K\bigg(\sum_{\substack{(\omega,\omega')\in T_k^2\\
    \#\omega>\ell, \#\omega'>\ell\\
    \omega=\omega'}}2^{-\#\omega}Q(\omega)^2|d_\omega^0|^2\bigg)^{1/2}\bigg)^2\leq\frac{(|D|+|t|)^{2(\ell+1)}}{K^2}\bigg(\sum_{k=1}^K\bigg(\sum_{\substack{(\omega,\omega')\in T_k^2\\
    \#\omega>\ell, \#\omega'>\ell\\
    \omega=\omega'}}2^{-\#\omega}\bigg)^{1/2}\bigg)^2
\end{equation}

Since the sampled trees contain only splits in up to two branches, and branches only split when encoutering processes $\mathbf{D_{\pm1_X}}$ or $\mathbf{D_{\pm1_Y}}$, we still have that for any given tree $T_k$,

\begin{equation}
    \sum_{\substack{(\omega,\omega')\in T_k^2\\
    \#\omega>\ell, \#\omega'>\ell\\
    \omega=\omega'}}2^{-\#\omega}\leq\sum_{\substack{(\omega,\omega')\in T_k^2\\
    \omega=\omega'}}2^{-\#\omega}=1
\end{equation}

This allows us to bound the truncation error,

\begin{equation}
    ||\Tilde{\hat{f}}-\hat{f}||_2\leq (|D|+|t|)^{\ell+1}
\end{equation}

Using Hoeffding's inequality and that each of the sampled trees yields an expectation value in the range of $[-1,1]$, we can conclude that with probability at least $1-\delta$,

\begin{equation}
        \Delta(f,\hat{\Tilde{f}})\leq  (|D|+|t|)^{\ell+1}+ \sqrt{\frac{2\log{(\delta^{-1}/2})}{K}}
\end{equation}
\end{proof}

\section{Comparison to other simulation methods}

\textbf{Comparison to other results:} Pauli backpropagation has proven to be a powerful technique for simulating quantum circuits, both in noiseless~\cite{rudolph2023,angrisani2024,lerch2024} and noisy~\cite{fontana2023,Shao2024,schuster2024,mele2024,gonzalezgarcía2024paulipathsimulationsnoisy} settings. These methods are motivated by theoretical guarantees that lay the foundation for their practical use, which are summarized in Table \ref{tab:table_pauli}. We see that our guarantees have different improvements over existing results in the literature: either they hold for larger classes of circuits, require less randomness or have better runtime guarantees.

For unital noise, such as depolarizing and dephasing noise,~\cite{fontana2023} and~\cite{Shao2024} have demonstrated that it is possible to approximate expectation values in average over rotation angles in polynomial time. These algorithms are both space and time efficient, enabling fast computations of expectation values. Similar results were reported in~\cite{schuster2024}, where the approximation holds on average over the input states. Additionally, assuming anti-concentration, accurate sampling from the quantum circuit was shown to be feasible.

However, extending these methods beyond unital noise remained challenging. For instance,~\cite{schuster2024} investigated only a randomized version of amplitude damping, while~\cite{Shao2024} had to include depolarizing noise on top of the amplitude damping channel to ensure their approximation held. For random circuits, a quasi-polynomial time Pauli backpropagation algorithm was derived (assuming geometrically local circuits) under non-unital noise \cite{mele2024}. However, under slightly stronger randomness assumptions, it has been shown that even noiseless circuits can be simulated using Pauli backpropagation techniques~\cite{angrisani2024}, with slightly worse run-time. In the noiseless regime, recent work has also established approximation bounds on certain regions of the energy landscape~\cite{lerch2024}. 

\begin{table}[h!]
\centering
\begin{tabular}
{|p{1.9cm}||p{2cm}|p{2.3cm}|p{1.9cm}|p{2cm}|p{2.2cm}|p{2.2cm}|p{2.2cm}|}
\hline
Reference & Ref. \cite{fontana2023,Shao2024} & Ref. \cite{schuster2024} & Ref. \cite{mele2024} & Ref. \cite{angrisani2024} & Ref. \cite{lerch2024} & Alg.~\ref{alg:Algorithm} & Alg.~\ref{alg:AlgorithmMC}\\ \hline
Noise model & Unital noise  & Depolarising + random amplitude damping & Non-unital noise & Noiseless & Noiseless & Non-unital noise & Non-unital noise\\ \hline
Circuit & Any circuit, random angles & Any circuit, random input states& Random circuits & Random circuits & Any circuit, random \textit{small} angles & Almost any circuit, random angles & Any circuit, random angles\\ \hline
Runtime $(\epsilon, p, n,m, D)$ & $n^2m(1/\epsilon)^{1/p}$ & $mn^{\log{(\sqrt{m+1}/\epsilon)}/p}$ & $e^{\log^D(\epsilon^{-1})}$ & $mn^{\log 1/\epsilon}$ & $m^{\log{1/\epsilon}}$ & $n^2m(1/\epsilon^2)^{3/2p}$ & $n^2m(1/\epsilon^2)^{1+1/p}$ \\ \hline
\end{tabular}
\caption{Comparison of the different Pauli backpropagation methods and their guarantees. The runtime may depend on the number of qubits $n$, the depth of the quantum circuit $m$, the circuit geometry $D$, the additive precision $\epsilon$ and the noise strength $p$.}
\label{tab:table_pauli}
\end{table}

\textbf{Future work:} A significant open question in the field concerns the extension of Pauli backpropagation methods to accommodate any noise model. In this work, we have made a great progress in this direction by handling a broad class of non-unital noise, but there still exist noise channels that fall beyond the scope of this work. Adapting to even more general noise models demands additional efforts and poses challenges due to the complexity of the backpropagated tree splitting. Furthermore, we have shown that for many noise models, it is possible to handle rotations along \emph{any} axis. It still remains to show whether this is true for all non-unital noise models. Another important problem is the simulation of noisy continuous time dynamics with noise, which could offer valuable insights into the performance of noisy quantum simulators. Further research in these areas could help understand the conditions under which quantum computations could offer real advantages.

\end{document}